\newtheorem{thm}{Theorem}
\newtheorem{lem}{Lemma}
\newtheorem{de}{Definition}
\newtheorem{note}{Notes}
\newtheorem{prop}{Proposition}
\newtheorem{cor}{Corollary}
\newtheorem{exam}{Example}
\def\rnum#1{\resizebox{0.5em}{\height}{\expandafter{\romannumeral #1}}}
\def\Rnum#1{\resizebox{0.5em}{\height}{\uppercase\expandafter{\romannumeral #1}}}
\journal{Theoretical Computer Science}
\begin{document}

\begin{frontmatter}

\title{On the Properties of Language Classes Defined by Bounded Reaction Automata}

\author[label1]{Fumiya Okubo}
\ead{f.okubo@akane.waseda.jp}
\author[label2]{Satoshi Kobayashi}
\ead{satoshi@cs.uec.ac.jp}
\author[label3]{Takashi Yokomori\corref{cor1}}
\ead{yokomori@waseda.jp}
\cortext[cor1]{Corresponding author}

\address[label1]{Graduate School of Education,
Waseda University, 1-6-1 Nishiwaseda, Shinjuku-ku, Tokyo 169-8050,
Japan}
\address[label2]{Graduate School of Informatics and Engineering,
University of Electro-Communications, 1-5-1 Chofugaoka, Chofu-shi, Tokyo 182-8585, Japan}
\address[label3]{Department of Mathematics,  Faculty of Education and Integrated Arts and Sciences,
Waseda University, 1-6-1 Nishiwaseda, Shinjuku-ku, Tokyo 169-8050,
Japan}

\begin{abstract}
Reaction automata are a formal model that has been introduced to  investigate the computing powers of interactive behaviors of biochemical reactions(\cite{OKY:12}). Reaction automata are language acceptors with multiset rewriting mechanism whose basic frameworks are based on reaction systems introduced in \cite{ER:07a}.  

In this paper we continue the investigation of reaction automata with  a  focus on the formal language theoretic properties of subclasses of reaction automata, called {\it linear-bounded} reaction automata (LRAs) and {\it exponentially-bounded} reaction automata (ERAs).    Besides LRAs, we newly introduce an extended model (denoted by $\lambda$-LRAs) by allowing $\lambda$-moves in the accepting process of reaction, and 
investigate the closure properties of language classes accepted by 
both LRAs and $\lambda$-LRAs.   Further, we  establish  new relationships of language classes accepted by LRAs and by ERAs  with the Chomsky hierarchy. 
The main results include the following : \\
\quad (\,i\,) the class of languages accepted by $\lambda$-LRAs forms an AFL with additional closure properties,\\
\quad (\rnum{2})  any recursively enumerable language can be expressed as  
a homomorphic image of a language accepted by an LRA,\\
\quad (\rnum{3}) the class of languages accepted by ERAs coincides with the class of context-sensitive languages. 
\end{abstract}

\begin{keyword}
biochemical reaction model; \  bounded reaction 
automata; \ abstract family of languages; \ closure property
\end{keyword}

\end{frontmatter}

\section{Introduction}
There exist two major categories in the research of mathematical modeling of biochemical reactions.  One is an analytical framework based on ordinary differential equations (ODEs) in which macroscopic behaviors of molecules are formulated as ODEs by means of approximating a massive number of molecules (or molecular concentration) by a continuous quantity. The other is a discrete framework based on the multiset rewriting in which a set of various sorts of molecular species in small quantities is represented by a multiset and a biochemical reaction is simulated by replacing the multiset  with another one, under a prescribed condition (\cite{AV:11, CPRS:01, KMP:01, KTZ:09b, Set:01}). 

Among many models that have been investigated from the 
viewpoint of the latter category mentioned above,  Ehrenfeucht and Rozenberg have introduced a formal model called {\it reaction systems} for investigating interactive behaviors between biochemical reactions 
 in which two basic components (reactants and inhibitors) play a key role as a regulation mechanism in controlling biochemical functionalities 
(\cite{ER:07a,ER:07b,ER:09}).   In the same framework, they also introduced the notion of time into reaction systems and investigated notions such as reaction times, creation times of compounds and so forth.  Rather recent two papers  \cite{EMR:10,EMR:11} continue the investigation of reaction systems, with the focuses on combinatorial properties of functions defined by random reaction systems and on the dependency relation between the power of defining functions and  the amount of available resource.    
In the theory of reaction systems, a biochemical reaction is formulated as a triple $a=(R_a, I_a, P_a)$, where $R_a$ is the set of molecules called {\it reactants}, $I_a$ is the set of molecules called {\it inhibitors}, and $P_a$ is the set of molecules called {\it products}. Let $T$ be a set of molecules, and the result of applying a 
 reaction $a$ to $T$, denoted by $res_a(T)$, is given by $P_a$ 
if $a$ is enabled by $T$ (i.e., if $T$ completely includes $R_a$ 
and excludes $I_a$). Otherwise, the result is empty.  Thus, $res_a(T)=R_a$ if $a$ is enabled on $T$, and $res_a(T)=\emptyset$ otherwise. The result of applying a reaction $a$ is extended to the set of reactions $A$, denoted by $res_A(T)$, and an interactive process 
consisting of a sequence of $res_A(T)$'s is properly introduced and investigated. 

Inspired by the works of reaction systems,  we have introduced in \cite{OKY:12} computing devices called {\it reaction automata} and  showed  that they are computationally universal by proving that any recursively enumerable language is accepted by a reaction automaton.  
The notion of reaction automata may be regarded as an extension of reaction systems in the sense that our reaction automata deal with {\it multisets} rather than (usual) sets  as reaction systems do, in the sequence of computational process.  
However,  reaction automata are introduced as computing devices that accept the sets of {\it string objects} (i.e., languages over an alphabet).  This feature of a string accepting device based on multiset 
computing can be realized by introducing a simple idea of feeding an input to the device from the environment and by employing a special encoding technique.

In reaction systems, a number of working assumptions are adopted among which  there are two to be remarked : Firstly, the {\it threshold supply} of elements (molecules) requires that for each element, either enough quantity of it is always supplied to react or it is not present at all. (Thus, reaction systems work with sets rather than multisets.) Secondly, the {\it non-permanency of elements} means that any element not involved in the active reaction ceases  to exist. (Thus, each element has a limited life-span of the unit time.)  
In contrast,  reaction automata assume  properties rather orthogonal to those features of reaction systems: They are defined as computing devices that deal with multisets (rather than sets) in the computing  process of biochemical interactions. It is also assumed that each element is sustained for free if it is not invloved in the reaction.

Before introducing the formal definition of reaction automata in the later  section, we want to describe with an example how a reaction automaton behaves in an interactive way with a given input.  
Figure \ref{graphic} illustrates an intuitive idea of the behavior of a reaction automata $\mathcal {A}_0=(S,\Sigma,A,p_0,f)$, where 
 $S=\{p_0, a, b, c, a', b', c', f \}$ is the set of objects with the input alphabet $\Sigma=\{a, b, c\}$, 
$A=\{{\bf a}_1 = ( a, \{b, b'\}, a' ), \,{\bf a}_2 = ( a'b, \{c, c'\}, b' ),\, {\bf a}_3 = ( b'c, \emptyset, c' ), \, {\bf a}_4 = ( p_0, \{a,b,c,a',b'\}, f )\}$ is the set of reactions, 
 $p_0$ is the initial multiset, and $f$ is the special object to indicate a final multiset. 
Note that in a reaction ${\bf a}=(R_a, I_a, P_a)$, multisets $R_a$ and $P_a$ are  represented by string forms, while $I_a$ is given as a set.  
In  Figure \ref{graphic},  each reaction ${\bf a}_i$ 
is applied to a multiset (of a test tube) after receiving an input symbol (if any is provided from the environment). We assume in this example that no input symbol being fed implies that the input has already been completed. Thus, for instance, when ${\bf a}_4$ is applied to the initial multiset $\{p_0\}$ without any input symbol being fed, which implies that 
the input is the empty string $\lambda$ and that it is  accepted by $\mathcal{A}_0$. For each $n\geq 0$, let $T_n=\{p_0, a'^n\}$. Then,  reactions ${\bf a}_1$ and ${\bf a}_2$ are enabled by the multiset $T_n$ only when inputs $a$ and $b$ are received,  which result in producing $T_{n+1}=\{p_0,a'^{n+1}\}$ and $T'_n=\{p_0,a'^{n-1},b'\}$, respectively.   
Once applying ${\bf a}_2$ to $T_n$ has brought $b'$ into $T'_n$, $\mathcal{A}_0$ has no possibility of applying ${\bf a}_1$ furthermore, because of its inhibitor $\{b,b'\}$.  Afterwards, a successful reaction process can continue only when either $b$ or $c$ is fed, and only an input sequence of $b^{n-1}$ followed by $c^n$ 
eventually leads the reaction process to a multiset $T"_n=\{f,c'^{n}\}$ from which no further multiset is derived and this reaction process terminates. 
One may easily see that $\mathcal{A}_0$ accepts the language $L=\{a^nb^nc^n\mid n\geq 0\}$. One important assumption we would like to remark is that reaction automata allow a multiset of reactions $\alpha$ to apply to a multiset of objects $T$ in an exhaustive manner (what is called  {\it maximally parallel manner}),  and  the interactive process sequence of computation is nondeterministic in that the reaction result from  $T$ may produce more than one product. The details are formally described in the sequel.

\begin{figure}[t]
\centerline{
\includegraphics[scale=0.27]{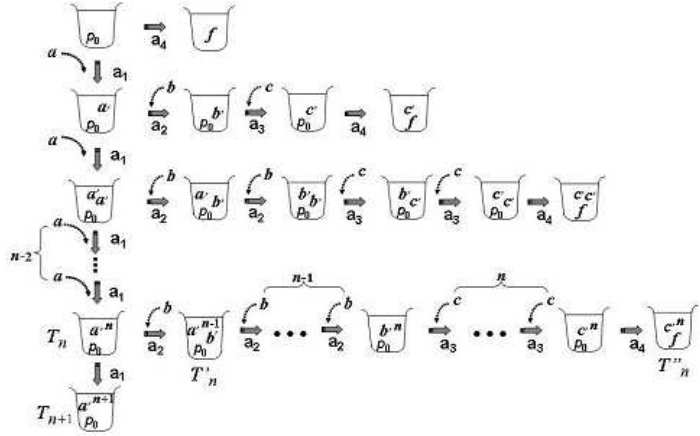}}
\caption{A graphic illustration of interactive biochemical reaction processes for accepting  the language $L=\{a^nb^nc^n \mid n\geq 0\}$ in terms of the reaction automaton  $\mathcal{A}_0$. }
\label{graphic}
\end{figure}

In this paper we  continue the  investigation of reaction automata   
with a  focus on the formal language theoretic properties of subclasses of reaction automata, called {\it linear-bounded} reaction automata (LRAs) and {\it exponentially-bounded} reaction automata (ERAs).  Besides LRAs, we will newly introduce an extended model 
(denoted by $\lambda$-LRAs) by allowing $\lambda$-moves in the accepting process of reaction, and investigate the closure properties of language classes  $\mathcal{LRA}$ and $\lambda$-$\mathcal{LRA}$ 
accepted by  LRAs and $\lambda$-LRAs, respectively.  
 We also investigate the relationships of language classes 
 $\mathcal{LRA}$ and $\mathcal{ERA}$ (the class of languages accepted by ERAs) with the Chomsky hierarchy.

This paper is organized as follows.  After preparing the basic notions and notations from formal language theory in Section 2, we formally describe  the notion of reaction automata (RAs)  and introduce several subclasses of reaction automata such as LRAs, $\lambda$-LRAs and ERAs, based on their volume (space) complexity in Section 3. Then,  the closure properties of the language classes $\mathcal{LRA}$ and $\lambda$-$\mathcal{LRA}$ 
are investigated in  Section 4 and Section 5,  respectively. It is shown  that $\lambda$-$\mathcal{LRA}$ forms an AFL with some additional closure properties.  In Section 6, we also establish  the relations of language classes  $\mathcal{LRA}$ and $\mathcal{ERA}$ to the classes in the Chomsky hierarhy.  Specifically, we show that  any recursively enumerable language can be expressed as  a homomorphic image of a language in $\mathcal{LRA}$. It is also shown that the language class   $\mathcal{ERA}$ coincides with the class of context-sensitive languages.  
Finally,  concluding remarks as well as future research topics are briefly discussed in Section 7.

\section{Preliminaries}

We assume that the reader is familiar with the basic notions of formal language theory.  For unexplained details, refer  to~\cite{HMU:03}. 

Let $V$ be a finite alphabet. For a set $U \subseteq V$, the cardinality of $U$ is denoted by $|U|$. The set of all finite-length strings over $V$ is denoted by $V^*$. The empty string is denoted by $\lambda$.  For a string $x$ in $V^*$,  $|x|$ denotes the length of $x$, while for a symbol $a$ in $V$ we denote  by  $|x|_a$ the number of occurences of $a$ in $x$.  

A morphism $h: V^* \rightarrow U^*$ such that $h(a) \in U$ for all $a \in V$ is called a \textit{coding}, and it is a \textit{weak coding} if $h(a) \in U \cup \{ \lambda \}$ for all $a \in V$. A weak coding is a \textit{projection} if $h(a) \in \{a, \lambda \}$ for each $a \in V$. 

We use the basic notations and definitions regarding multisets that follow~\cite{CMM:01,KMP:01}.
A {\it multiset} over an alphabet $V$ is a mapping $\mu: V \rightarrow \mathbf{N}$, 
where $\mathbf{N}$ is the set of non-negative integers and for each $a \in V$, $\mu(a)$ represents the number of occurrences of $a$ in the multiset $\mu$.  
The set of all multisets over $V$ is denoted by $V^\#$, including the empty multiset denoted by $\mu_{\lambda}$, where $\mu_{\lambda}(a)=0$ for 
all $a\in V$.  A multiset $\mu$ may be represented as a vector, $\mu(V) = (\mu(a_1), \ldots, \mu(a_n))$, for an ordered set $V = \{ a_1, \ldots, a_n \}$.  We can also represent the multiset $\mu$ by any permutation of the string $w_\mu = a^{\mu(a_1)}_1 \cdots a^{\mu(a_n)}_n$. Conversely, with any string $x \in V^*$ one can associate the multiset $\mu_x : V \rightarrow \mathbf{N}$ defined by $\mu_x(a) = |x|_a$ for each $a \in V$. In this sense, we often identify a multiset $\mu$ with its string representation $w_{\mu}$ or any permutation of $w_{\mu}$.  Note that the string representation of $\mu_{\lambda}$ is $\lambda$, i.e., $w_{\mu_{\lambda}}=\lambda$.  

A usual set $U \subseteq V$ is regarded as a multiset $\mu_U$ such that 
 $\mu_U(a) = 1$ if  $a$ is in $U$ and $\mu_U(a)=0$ otherwise.  In particular, 
 for each symbol $a \in V$, a multiset $\mu_{\{a\}}$ is often denoted by $a$ itself.

For two multisets $\mu_1$, $\mu_2$ over $V$,  we define one relation and three operations as follows: 
\[
\begin{array}{ll}
{\it Inclusion}: &\mu_1 \subseteq \mu_2 \text{ \  iff \ } \mu_1(a) \le \mu_2(a),  \text{ for each }  a \in V, \\
\text{{\it Proper inclusion}}: &\mu_1 \subset \mu_2 \text{ \  iff \ } \mu_1 \subseteq \mu_2 \text{ and }  \mu_1 \ne \mu_2, \\
{\it Sum}:&(\mu_1 + \mu_2) (a) = \mu_1(a) + \mu_2(a),  \text{ for each }  a \in V,\\
{\it Intersection}:&(\mu_1 \cap \mu_2) (a) = {\rm min}\{\mu_1(a), \mu_2(a)\},  \text{ for each } a \in V,\\
{\it Difference}:&(\mu_1 - \mu_2) (a) = \mu_1(a) - \mu_2(a),  \text{ for each }  a \in V \text{ (for the case } \mu_2 \subseteq \mu_1 \text{)}.
\end{array}
\]
A multiset $\mu_1$ is called a {\it multisubset} of $\mu_2$ if $\mu_1 \subseteq \mu_2$.  
The sum for a family of multisets $\mathcal{M} = \{\mu_i \}_{i \in I}$ is also denoted by $\sum_{i \in I}\mu_i$. For a multiset $\mu$ and $n \in \mathbf{N}$, $\mu^n$ is defined by $\mu^n (a) = n \cdot \mu(a)$ for each $a \in V$. The {\it weight} of a multiset $\mu$ is $|\mu| = \sum_{a \in V} \mu(a)$.

We introduce an injective function $stm : V^* \to V^\#$ that maps a string to a multiset in the following manner: 
\begin{eqnarray*}
\left\{ \begin{array}{ll}
stm(a_1 a_2 \cdots a_n) = a^{2^{n-1}}_1 \cdots a^2_{n-1} a_n  & \mbox{(for $n\geq 1$)} \\
stm(\lambda) = {\lambda}. &  \\
\end{array} \right.
\end{eqnarray*}

Let us denote by $\mathcal{REG}$ (resp. $\mathcal{LIN, CF, CS, RE}$) 
 the class of regular (resp. linear context-free, context-free, context-sensitive, recursively enumerable) languages.

\section{Reaction Automata and Bounded Variants}

Inspired by the works  of reaction systems, we have introduced the notion of reaction automata in \cite{OKY:12} by extending sets in each reaction to multisets. Here, we start by recalling basic notions concerning reaction automata and their restricted variants called {\it bounded reaction automata}.

\begin{de}
{\rm 
For a set $S$,  a {\it reaction} in $S$ is a 3-tuple ${\bf a} = (R_{\bf a}, I_{\bf a}, P_{\bf a})$ of finite multisets, such that $R_{\bf a}, P_{\bf a} \in S^\#$, $I_{\bf a} \subseteq S$ and $R_{\bf a} \cap I_{\bf a} = \emptyset$.
}
\end{de}
The multisets $R_{\bf a}$ and $P_{\bf a}$ are called the {\it reactant} of ${\bf a}$ and the {\it product} of ${\bf a}$, respectively, while the set $I_{\bf a}$ is called the {\it inhibitor} of ${\bf a}$. These notations are extended to a multiset of reactions as follows:    For a set of reactions $A$ and a multiset $\alpha$ over $A$,  
\[ R_{\alpha} = \sum_{{\bf a}\in A} R_{\bf a}^{\alpha({\bf a})}, \, \, I_{\alpha} = \bigcup_{ {\bf a} \subseteq \alpha} I_{\bf a}, \, P_{\alpha} = \sum_{{\bf a}\in A} P_{\bf a}^{\alpha({\bf a})}. \] 

In what follows,  we usually identify the set of reactions $A$ with the set of labels of reactions in $A$, and often use the symbol $A$ as a finite alphabet.

\begin{de}
{\rm 
Let $A$ be a set of reactions in $S$ and  $\alpha \in A^\#$ be a multiset of reactions 
over $A$.  Then, for  a finite multiset $T \in S^\#$, we say that \\
(1) $\alpha$ is {\it enabled by} $T$ if $R_\alpha \subseteq T$ and $I_\alpha \cap T = \emptyset$, \\
(2)  $\alpha$ is {\it enabled by $T$ in maximally parallel manner}   
if there is no  $\beta \in A^\#$ such that $\alpha \subset \beta$,  and  $\alpha$ and $\beta$  are enabled by  $T$.   \\
(3)  By $En^p_A(T)$ we denote the set of all multisets of reactions $\alpha \in A^\#$ which are enabled by $T$ in maximally parallel manner.\\
(4) The {\it results of $A$ on $T$}, denoted by $Res_{A}(T)$, is defined as follows: 
\[ 
Res_{A}(T) = \{ T - R_{\alpha} + P_{\alpha} \, | \, \alpha \in En^p_A(T)  \}. 
\] 
Note that we have $Res_{A}(T)= \{ T \}$ if $En^p_A(T)=\emptyset$. Thus, if no multiset of reactions $\alpha \in A^{\#}$ is enabled by $T$ in maximally parallel manner, then $T$ remains unchanged.  
}
\end{de}

\begin{note}
{\rm 
 (\,i\,)\ As is mentioned earlier, the definition of the results of $A$ on $T$ given in (4) is in contrast to the original one in \cite{ER:07a},  because 
we  adopt the assumption that  
any element that is not a reactant for any active reaction {\it does} remain in the result after the reaction.\\
(\rnum{2})\ In general, $En^p_A(T)$ may contain more than one element, and therefore, so may $Res_A(T)$.\\
(\rnum{3})\ For simplicity, $I_a$ is often represented as a string rather than a set.
}
\end{note}

We are now in a position to introduce the notion of reaction automata.

\begin{de}{\rm 
{(Reaction Automata)}\ A {\it reaction automaton} (RA) $\mathcal{A}$ is a 5-tuple $\mathcal{A} = (S, \Sigma, A, D_0, f)$, where
\begin{itemize}
\item $S$ is a finite set,  called the {\it background set of}  $\mathcal{A}$,
\item $\Sigma (\subseteq S)$ is called the {\it input alphabet of}  $\mathcal{A}$, 
\item $A$ is a finite set of reactions in $S$,
\item $D_0 \in S^\#$ is an {\it initial multiset},
\item $f \in S$ is a special symbol which indicates the final state.
\end{itemize}
}
\end{de}

\begin{de}{\rm 
Let $\mathcal{A} = (S, \Sigma, A, D_0, f)$ be an RA and $w = a_1 \cdots a_n \in \Sigma^*$.  An {\it interactive process in $\mathcal{A}$ with input $w$} is an infinite sequence  
$\pi = D_0, \ldots, D_i, \ldots$, where 
\begin{eqnarray*}
\left\{ \begin{array}{ll}
 D_{i+1} \in Res_A(a_{i+1}+D_i)   & \mbox{(for $0\leq i \leq n-1$), and} \\
 D_{i+1} \in Res_A(D_i) & \mbox{(for all $i\geq n$)}.
\end{array} \right.
\end{eqnarray*}
In order to represent an interactive process $\pi$, we also use 
the ``arrow notation'' for $\pi$ : 
 $D_0 \rightarrow^{a_1}  D_1 \rightarrow^{a_2} D_2 \rightarrow^{a_3} 
\cdots \rightarrow^{a_{n-1}}  D_{n-1} \rightarrow^{a_{n}}  D_{n}  \rightarrow D_{n+1} \rightarrow \cdots$.   By $IP(\mathcal{A}, w)$ we denote the set of all interactive processes in $\mathcal{A}$ with input $w$.}
\end{de}

For an interactive process $\pi$ in $\mathcal{A}$ with input $w$, if $En^p_A(D_m) = \emptyset$ for some $m \ge |w|$, then we have that  $Res_A(D_m)= \{ D_m \}$ and $D_m =D_{m+1}=
\cdots$. In this case, considering the smallest $m$, we say that $\pi$ {\it converges on} $D_m$ 
(at the $m$-th step). If an interactive process $\pi$ converges 
on $D_m$, then $D_m$ is called the {\it converging state} of $\pi$ and each $D_i$ of $\pi$ is omitted for $i \ge m+1$.

\begin{de}{\rm 
Let $\mathcal{A} = (S, \Sigma, A, D_0, f)$ be an RA. Then, we define:  
\begin{align*}
IP^a(\mathcal{A},w) = \{ \pi \in IP(\mathcal{A}, w) \mid & \mbox{ $\pi$ converges on $D_m$ at the $m$-th step for some 
$m\geq |w|$} \\
  & \mbox{ and $f \subseteq D_m$} \}.  
\end{align*}
The {\it language accepted by} $\mathcal{A}$, denoted by $L(\mathcal{A})$, is defined as follows:
\begin{align*}
L(\mathcal{A}) = \{ w \in \Sigma^* \, | \, & \,  IP^a(\mathcal{A}, w) \ne \emptyset \}. 
\end{align*}}

\end{de}

Let $\mathcal{A}$ be an RA.  
Motivated by the notion of a workspace for a phrase-structure grammar (\cite{AS:73}), we define: for $w\in L(\mathcal{A})$ with $n=|w|$, and for $\pi$ in $IP^a(\mathcal{A},w)$,
\[
WS(w,\pi) = \underset{i}{{\rm max}} \{|D_i| \mid  D_i \mbox{ appears in } \pi \ \}. 
\]
Further, the {\it workspace of} $\mathcal{A}$ {\it for} $w$ is defined as:
\[
WS(w,\mathcal{A}) =\underset{\pi}{{\rm min}} \{WS(w,\pi)  \mid \pi \in IP^a(\mathcal{A},w)\  \}.  
\]
 
\begin{de}{\rm Let  $s$ be a function defined on $\mathbf{N}$.\\
(1)\ An RA  $\mathcal{A}$ is {\it $s(n)$-bounded} if for any $w\in L(\mathcal{A})$ with $n=|w|$,   $WS(w,\mathcal{A})$ is bounded by $s(n)$. \\
(2)\ If a function $s(n)$ is a constant $k$ (resp. linear, polynomial,  exponential), 
then  $\mathcal{A}$ is termed $k$-bounded (resp. linearly-bounded,  polynomially-bounded,  exponentially-bounded), and denoted by 
$k$-RA (resp. $lin$-RA, $poly$-RA, $exp$-RA). Further, 
the class of languages accepted by $k$-RAs (resp. $lin$-RAs,  $poly$-RAs, $exp$-RAs, arbitrary RAs) is denoted by $k$-$\mathcal{RA}$ 
(resp. $\mathcal{LRA, PRA, ERA, RA}$). 
}
\end{de}

\begin{prop}{\rm (Theorem 3 in \cite{OKY:12})} The following inclusions hold {\rm :}  \\
{\rm (1)}. $\mathcal{REG}=k$-$\mathcal{RA} 
\subset  \mathcal{LRA} \subseteq \mathcal{PRA} \subset \mathcal{ERA} \subseteq \mathcal{RA} =  \mathcal{RE}$ {\rm (for each $k\geq 1$)}. \\
{\rm (2)}. $\mathcal{LRA} \subset \mathcal{CS} \subseteq \mathcal{ERA}$. \\
{\rm (3)}. $\mathcal{LIN}$ {\rm (}$\mathcal{CF}${\rm )} and $\mathcal{LRA}$ are incomparable.
\label{prop-ra}
\end{prop}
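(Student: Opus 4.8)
The plan is to read the statement as a chain of four kinds of facts and to dispatch each kind with its own tool: the trivial containments that come for free from the ordering $\text{constant}\le\text{linear}\le\text{polynomial}\le\text{exponential}$ of the bounding functions; the two endpoint characterizations $\mathcal{REG}=k\text{-}\mathcal{RA}$ and $\mathcal{RA}=\mathcal{RE}$; the middle characterization through $\mathcal{CS}$; and the strict separations and incomparabilities, which I would derive from a single counting lemma. First I would dispose of $k\text{-}\mathcal{RA}\subseteq\mathcal{LRA}\subseteq\mathcal{PRA}\subseteq\mathcal{ERA}\subseteq\mathcal{RA}$, each holding because an $s(n)$-bounded RA is automatically $s'(n)$-bounded whenever $s\le s'$.

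For $\mathcal{REG}=k\text{-}\mathcal{RA}$: if the workspace is bounded by a constant $k$, then since $S$ is finite every reachable multiset has weight $\le k$, so the configurations appearing along accepting processes form a finite set; reading one input symbol per step and applying $Res_A$ then defines a finite nondeterministic transition system, i.e.\ an NFA, giving $k\text{-}\mathcal{RA}\subseteq\mathcal{REG}$. Conversely I would encode the current NFA state as a single object of $S$ and each transition as a reaction consuming that object together with the fed input symbol and producing the next state object, which runs in constant workspace, giving $\mathcal{REG}\subseteq k\text{-}\mathcal{RA}$ for every $k\ge1$. For $\mathcal{RA}=\mathcal{RE}$ the inclusion $\subseteq$ is Church--Turing (the step relation $Res_A$ is computable, so $L(\mathcal{A})\in\mathcal{RE}$), while $\supseteq$ is the universality construction of \cite{OKY:12}, which simulates a type-0 grammar by using the injective map $stm$ to store a string positionally inside one multiset; I would reuse that construction.

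For the middle band $\mathcal{LRA}\subseteq\mathcal{CS}\subseteq\mathcal{ERA}$ I would simulate an LRA by a linear-bounded automaton, representing a weight-$O(n)$ multiset by the vector of its symbol counts and, at each step, nondeterministically guessing a maximal enabled multiset $\alpha$ of reactions and updating to $T-R_\alpha+P_\alpha$ within linear space. The reverse-flavoured containment $\mathcal{CS}\subseteq\mathcal{ERA}$ is where the real work lies: I would encode a length-$n$ sentential form of a context-sensitive grammar by the $stm$ map so that the position of each symbol is recorded by a distinct power of two, inflating the weight to $2^{O(n)}$ but making positions addressable inside an unordered multiset, and then implement the length-nonincreasing rules as reactions acting on these positional blocks. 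Keeping this simulation faithful while staying inside the exponential workspace bound is, I expect, the main obstacle on the constructive side.

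The separations and the incomparability with $\mathcal{CF}/\mathcal{LIN}$ I would obtain from one counting lemma: along a minimal-workspace accepting process an $s(n)$-bounded RA passes only through multisets of weight at most $s(n)$, and over the finite set $S$ there are only $O(s(n)^{|S|})$ of these. The example automaton $\mathcal{A}_0$ places $\{a^nb^nc^n\}$ in $\mathcal{LRA}$, and as it is not context-free this yields $k\text{-}\mathcal{RA}=\mathcal{REG}\subsetneq\mathcal{LRA}$ and $\mathcal{LRA}\not\subseteq\mathcal{CF},\mathcal{LIN}$. For the opposite direction I would take $L=\{w\#w^{R}\mid w\in\{a,b\}^*\}$: along an accepting process for $w\#w^{R}$ let $C_w$ be the configuration reached right after the prefix $w\#$; if $C_w=C_{w'}$ for $w\ne w'$ of equal length, then splicing the prefix computation on $w\#$ onto the suffix computation on ${w'}^{R}$ accepts $w\#{w'}^{R}\notin L$, a contradiction, so $w\mapsto C_w$ is injective on $\{a,b\}^{n}$ and forces at least $2^{n}$ distinct configurations. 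Since a linear (resp.\ polynomial) bound allows only polynomially many configurations, this gives $L\notin\mathcal{LRA}$ (resp.\ $L\notin\mathcal{PRA}$), hence $\mathcal{LIN},\mathcal{CF}\not\subseteq\mathcal{LRA}$, the strictness $\mathcal{LRA}\subsetneq\mathcal{CS}$ (as $L\in\mathcal{CF}\subseteq\mathcal{CS}$), and, via $L\in\mathcal{CF}\subseteq\mathcal{CS}\subseteq\mathcal{ERA}$, the strictness $\mathcal{PRA}\subsetneq\mathcal{ERA}$. The delicate point here is handling nondeterminism correctly: working with a single minimal-workspace accepting process per input and checking that the spliced sequence is again a valid converging interactive process reaching a multiset containing $f$.
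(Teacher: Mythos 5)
Your decomposition is sound, and note first that the paper itself never proves this proposition: it is imported verbatim as Theorem 3 of \cite{OKY:12}, so the comparison can only be against the machinery visible elsewhere in the paper. Measured that way, most of your plan is correct. The monotonicity inclusions are trivial as you say; the finite-configuration NFA simulation gives $k$-$\mathcal{RA}\subseteq\mathcal{REG}$ (your converse sketch needs one repair: the accepting reaction must carry $\Sigma$ as inhibitor, e.g.\ $(q,\Sigma,f)$, since otherwise $f$ can be produced mid-input next to unconsumed input symbols and survive to convergence, wrongly accepting); and the count-vector LBA simulation for $\mathcal{LRA}\subseteq\mathcal{CS}$ is exactly the technique of the paper's Theorem 6. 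Your separation argument --- fix one workspace-minimal accepting process per input, let $C_w$ be the single configuration reached after the prefix, and splice two processes when $C_w=C_{w'}$ --- is correct, since validity of an interactive process is checked step-by-step and is therefore preserved under splicing at a common configuration. It is in fact tighter than the paper's own version of this counting technique (Lemma 4), which works with the sets $I(w)$ of all reachable configurations and bounds the number of pairwise incomparable subsets of $\mathcal{D}_{p(n)}$ by $|\mathcal{D}_{p(n)}|$ --- a step that is not justified as written, since an antichain of subsets can be exponentially larger than the ground set; your element-level pigeonhole avoids that issue entirely.

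The genuine gap is $\mathcal{CS}\subseteq\mathcal{ERA}$, which you flag but do not close, and on which statement (2), the strictness $\mathcal{PRA}\subset\mathcal{ERA}$, and the needed membership $\{w\#w^R \mid w\in\{a,b\}^*\}\in\mathcal{ERA}$ all depend. The sketch you give --- encode a sentential form by $stm$ and ``implement the length-nonincreasing rules as reactions acting on these positional blocks'' --- fails as stated: a reaction has a fixed reactant and product, but rewriting the symbol at position $i$ of $stm(a_1\cdots a_n)$ means consuming $a_i^{2^{n-i}}$, a multiplicity depending on $i$ and $n$, so no finite set of reactions can address an arbitrary position directly. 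What is needed is the halving/doubling machinery of \cite{OKY:12}, visible in miniature in the paper's Example 1 and in its gsm-closure proof: reactions of the form $(x^2,\emptyset,x)$ and $(x,\emptyset,x^2)$, applied in maximally parallel manner, halve or double all counts in one step and thereby rotate the encoded string until the rewriting site sits at the low-order end with multiplicity $1$, where a fixed reaction can act; the workspace stays within $2^{O(n)}$ because sentential forms of a context-sensitive derivation never exceed the target length. Without carrying this out (or giving a direct ERA construction for your witness language by the same technique), parts of (1) and all of (2) remain unproved. Outsourcing $\mathcal{RE}\subseteq\mathcal{RA}$ to \cite{OKY:12} is defensible since the proposition is itself a citation, but observe that it is the same construction you would need here, so your proposal ultimately leans on the one ingredient it does not supply.
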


\begin{exam} \rm{
Let  $\mathcal{A} = (S, \Sigma, A, D_0, f)$ be an LRA defined  as follows:
\begin{align*}
&S = \{ c, p_0, p_1, n_1, c_1, c_2, d, e, f \}, \text{with } \Sigma = \{ c \}, \\
&A = \{ {\bf a}_1, {\bf a}_2, {\bf a}_3, {\bf a}_4, {\bf a}_5, {\bf a}_6, {\bf a}_{7}, {\bf a}_{8} \}, \ \mbox{where} \\
&\quad  {\bf a}_1 = ( p_0, c, p_1),\ \,{\bf a}_2 = ( p_1, ef, p_1 n_1 ), \\
&\quad  {\bf a}_3 = ( c, p_1, c_1 ),\ \, {\bf a}_4 = ( {c^2_1}, p_0 c_2 e, c_2 ), \ \,{\bf a}_5 = ( {c^2_2}, p_0 c_1 e, c_1 ), \\
&\quad  {\bf a}_6 = ( c_1 d, p_0 c_2, e ),\ \,{\bf a}_{7} = ( c_2 d, p_0 c_1, e ),\ \, {\bf a}_{8} = ( e, p_0 c c_1 c_2, f ),   \\
&D_0 = dp_0.
\end{align*}
Then, it holds that $L(\mathcal{A}) = \{ c^{2^n} \, | \, n \ge 0 \}$. Figure \ref{fig-c2n} illustrates the interactive process in $\mathcal{A}$ with the input $c^8$. } \label{exam-c2n}
\end{exam}

\begin{figure}[t]
\centerline{
\includegraphics[scale=0.55]{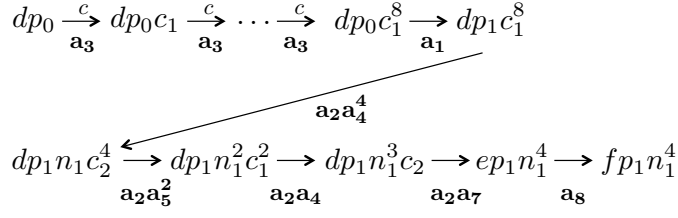}}
\caption{Reaction diagram for accepting $c^8$ in $\mathcal{A}$.}
\label{fig-c2n}
\end{figure}

\section{The closure properties of $\mathcal{LRA}$}

We investigate the closure properties of the class $\mathcal{LRA}$ under various language operations. To this aim, it is convenient to prove the following that one may call {\it normal form lemma} for a bounded class of RAs. 

In what follows, we assume that (i) the symbols (such as $S,\Sigma',S_1,S_2,Q$,etc.) used in the construction for the background set in the proof denote mutually disjoint sets, and (ii) the symbols (such as $p_0,p_1,c,d,f'$,etc.) are newly introduced in the proof.

\begin{de} \rm{
An $s(n)$-bounded RA $\mathcal{A} = (S, \Sigma, A, D_0, f)$ is said to be in} \it{normal form} \rm{if $f$ appears  only in a converging state of an interactive process.}
\end{de}

\begin{lem}
For an $s(n)$-bounded RA $\mathcal{A} = (S, \Sigma, A, D_0, f)$, there exists an $s(n)$-bounded RA $\mathcal{A'} = (S', \Sigma, A', D'_0, f')$ such that $L(\mathcal{A}) = L(\mathcal{A'})$ and $f'$ appears  only in a convergeing state of an interactive process. \label{lem-nf}
\end{lem}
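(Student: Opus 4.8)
The plan is to leave the dynamics of $\mathcal{A}$ essentially untouched and to graft on a mechanism that emits the new symbol $f'$ one step \emph{after} the original computation has come to rest. First, note that the structural requirement ``$f'$ appears only in a converging state'' can be secured for free by a single design choice: let $f'$ be a fresh symbol that never occurs in $D'_0$ nor in any reactant, and add $f'$ to the inhibitor set $I_{\bf a}$ of \emph{every} reaction ${\bf a}\in A'$. Then any multiset $T$ with $f'\subseteq T$ satisfies $En^p_{A'}(T)=\emptyset$, i.e.\ it is automatically a converging state, so the normal-form property holds for any reachable configuration containing $f'$. Hence the only real content is language preservation, which reduces to emitting $f'$ exactly at those converging configurations of $\mathcal{A}$ that carry $f$. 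To keep the simulation faithful I would rename the old final symbol $f$ to a fresh \emph{ordinary} symbol $\bar f$ throughout $A$ and $D_0$; as long as $f'$ has not appeared, $\mathcal{A}'$ runs exactly as $\mathcal{A}$ (identical enabled multisets, identical results), with $\bar f$ now playing the chemical role that $f$ used to play.

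The remaining task is a \emph{halting detector}: a gadget that fires a reaction of the form $\bar f\to \bar f\,f'$ precisely when the current configuration enables no reaction of $A$ and contains $\bar f$; by the inhibitor choice above the resulting successor is then converging, as required. Detecting ``some reaction is enabled'' is easy. For each ${\bf a}_j=(R_{{\bf a}_j},I_{{\bf a}_j},P_{{\bf a}_j})$ I would add a non-consuming \emph{probe} ${\bf c}_j=(R_{{\bf a}_j},I_{{\bf a}_j},R_{{\bf a}_j}+\eta)$, which reproduces its own reactants and deposits a flag $\eta$ exactly when ${\bf a}_j$ is enabled. Run in isolation under the maximal-parallel semantics, the net effect of all probes on a configuration $T$ is $T\mapsto T+\eta^{k}$ with $k\ge 1$ iff $T$ enables some reaction; thus the \emph{presence} of $\eta$ certifies non-convergence and its \emph{absence} certifies convergence. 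One then gates the finalizing reaction by placing $\eta$ among its inhibitors, and runs the probes in a nondeterministically entered verification phase in which the renamed reactions of $A$ are frozen (give each of them one extra inhibitor, switched on during the phase), so that $\eta$ reports the status of a stable snapshot rather than of a moving target. Completeness is then straightforward: the intended accepting run of $\mathcal{A}$ reaches a genuine converging state carrying $\bar f$ at some step $\ge|w|$, the verification there succeeds, and $f'$ is produced.

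The difficulty is entirely in making the detector \emph{sound}, and the proof has to be organised around two points. The minor one is that ``no reaction is enabled'' is a global negation whereas a single reaction's enabling condition is a conjunction; the probe collection reduces this to the one testable fact ``$\eta$ absent'', but reactant multiplicities $\ge 2$ make the ``a reactant is missing'' case delicate, since inhibitors test presence/absence and not counts, so the verification must test the (bounded) multiplicities destructively on a preserved copy of the snapshot. The main obstacle, which I expect to be the crux, is the interaction with the \emph{external} acceptance threshold $m\ge|w|$: input symbols are fed one per step, so a configuration that is only transiently quiescent at a step $<|w|$ must not be mistaken for a true converging state, because a later input symbol may revive the computation and — once $f'$ has frozen everything — would otherwise be silently ignored, yielding an unsound acceptance. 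The detector must therefore be arranged so that any input arriving after a tentative detection vetoes the emission of $f'$ (for instance a tentative marker that is destroyed whenever a renamed reaction fires and is promoted to $f'$ only once it has survived a step in which nothing fired), and one must argue carefully that this forces $L(\mathcal{A}')\subseteq L(\mathcal{A})$. Finally, since all additions amount to a bounded number of fresh symbols, probes, and the snapshot copy — enlarging a configuration by at most a constant factor plus a constant — we have $WS(w,\mathcal{A}')=O(WS(w,\mathcal{A}))$, so $\mathcal{A}'$ stays $s(n)$-bounded in the sense relevant to the classes $\mathcal{LRA}$ and $\mathcal{ERA}$.
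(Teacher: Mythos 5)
Your overall architecture coincides with the paper's (make $f'$ a fresh symbol that inhibits every reaction, so that any configuration containing $f'$ is automatically converging, then graft on a halting detector), and you correctly isolate the crux: transient quiescence at a step $<|w|$ must not trigger $f'$. But your fix for that crux is genuinely insufficient, because it conflates ``nothing fired at this step'' with ``the input has ended''. Concretely, take $\mathcal{A}$ with $\Sigma=\{a,b,c\}$, $D_0=f$ and the single reaction $(abcf,\emptyset,g)$; then $abc\notin L(\mathcal{A})$, since once all of $a,b,c$ have arrived the reaction fires and removes $f$. In your $\mathcal{A'}$, after the first symbol $a$ the configuration is quiescent and contains $\bar f$, so a tentative detection can occur; at the next step $b$ arrives but still nothing is enabled, so nothing fires, your marker ``survives a step in which nothing fired'' and is promoted to $f'$; the final $c$ is then silently absorbed by the frozen configuration, and the process converges at a step $\ge |w|$ with $f'$ present. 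Thus $\mathcal{A'}$ accepts $abc$ and is unsound: a revival that needs \emph{two or more} future input symbols defeats a veto that is triggered only by a reaction actually firing. What is needed is a detector for ``no input symbol arrived at this step'' (which, by the definition of interactive processes, implies no input will ever arrive again). The paper builds exactly this: each arriving symbol $a$ is instantly rewritten to a primed copy $a'$ by $(a,\emptyset,a')$ and the simulation runs on primed symbols, so a raw symbol of $\Sigma$ is present precisely at the step it is fed; a control symbol $p_0$ turns into $p_1$ via $(p_0,\Sigma,p_1)$, which can fire only at a step with no raw input, i.e.\ only after the whole input has been fed; and the $f'$-emitting reaction $(f,cp_0,f')$ is inhibited by $p_0$. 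Note that the priming is not cosmetic: in your unprimed construction, unconsumed raw input symbols linger forever, so inhibiting the promotion by $\Sigma$ instead would also veto legitimate acceptances (e.g.\ an $\mathcal{A}$ with no reactions accepts every $w$, with raw symbols left in the converging state).

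Separately, your probe-and-freeze detector is both heavier than necessary and shakier than you suggest. The paper needs no frozen snapshot: every simulated reaction additionally produces one copy of a blocker $c$, and $(c,\emptyset,\lambda)$ erases it one step later, so ``$c$ absent'' certifies that no simulated reaction fired at the previous step; since no other reaction of $\mathcal{A'}$ touches the primed symbols, the simulated configuration is static, and ``nothing fired'' already implies ``nothing is enabled now''. This sidesteps the problems your verification phase creates: resource competition under maximal parallelism between your finalizing reaction and a probe that needs $\bar f$ as a reactant (the finalizer can steal $\bar f$, suppress the probe's $\eta$, and emit $f'$ while a reaction is still enabled), and the status of a run after a failed verification (resuming amounts to ``pausing'' the simulation while input keeps arriving, which corresponds to no legal interactive process of $\mathcal{A}$ and can again create unsound acceptances). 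Finally, your worry about reactant multiplicities is misplaced: a probe $(R_j,I_j,R_j+\eta)$ is enabled exactly when the original reaction is, because reactant inclusion is multiset inclusion; no destructive counting on a copied snapshot is needed. The genuine obstacles are the two above, and the end-of-input detector is the one your proposal cannot do without.
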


\begin{proof}
For an LRA $\mathcal{A} = (S, \Sigma, A, D_0, f)$, construct an RA $\mathcal{A'} = (S', \Sigma, A', D'_0, f')$ and a mapping $h:S'^{\#} \rightarrow S'^{\#}$ as follows:
\begin{align*}
&S' = S \cup \Sigma' \cup \{ p_0, p_1, c, d, f' \}, \text{ where } \Sigma' = \{ a' | a \in \Sigma \}, \\
&A' = \{ ( h(R), h(I) \cup f', h(P) + c ) \, | \, ( R, I, P ) \in A \} \\
&\qquad \cup \{ ( a, \emptyset, a' ) \, | \, a \in \Sigma \} \cup \{ {\bf a}_1, {\bf a}_2, {\bf a}_3, {\bf a}_4  \}, \ \mbox{where} \\
&\quad  {\bf a}_1 = ( p_0 , \Sigma, p_1 ),\ \,{\bf a}_2 = ( c, \emptyset, \lambda ),\ \, {\bf a}_3 = ( f, cp_0, f' ), \ \,{\bf a}_4 = ( d, \emptyset, h(D_0) ),  \\
&D'_0 = d p_0,
\end{align*}
and
\begin{eqnarray*}
\left\{ \begin{array}{ll}
h(a) = a' \qquad (\text{for }a \in \Sigma),  \\
h(a) = a \qquad \, (\text{for }a \in S' - \Sigma).
\end{array} \right.
\end{eqnarray*}
Let $w \in \Sigma^*$ with $|w|=n$. Then, there exists an interactive process $\pi = D_0, \ldots, D_m \in IP^a(\mathcal{A}, w)$ which converges on $D_m$ if and only if there exists $\pi' = D'_0, \ldots, D'_{m+3} \in IP^a(\mathcal{A'}, w)$ which converges on $D'_{m+3}$  such that
\begin{eqnarray*}
\left\{ \begin{array}{ll|||}
D'_1 = h(D_0) + p_0 + a'_{1}, \\
D'_{i+1} = h(D_i) + p_0 + c^{j_i} + a'_{i+1} \qquad (\text{for } 1 \le i \le n-1, \text{ and some } j_i \ge 1),  \\
D'_{i+1} = h(D_i) + p_1 + c^{k_i} \qquad \qquad \ \ \, (\text{for }  n \le i \le m, \text{ and some } k_i \ge 1), \\
D'_{m+2} = h(D_m) + p_1, \\
D'_{m+3} = h(D_m) - f + f' p_1.
\end{array} \right.
\end{eqnarray*}
Note that (i) $j_i \le | D_{i-1} | \le s(n)$, $k_i \le | D_{i-1} | \le s(n)$. (ii) there may be $D_i$ in $\pi$ with $f \subseteq  D_i$, $0 \le i \le m-1$, but $f'$ cannot be derived from the corresponding state $D'_{i+1}$ in $\pi'$, because the blocking symbol $c$ exists in $D'_{i+2}$. Moreover, the workspace of $\mathcal{A'}$ is obviously $s(n)$-bounded.
\end{proof}

\begin{thm}
$\mathcal{LRA}$ is closed under union, intersection, concatenation, derivative, $\lambda$-free morphisms, $\lambda$-free gsm-mappings and shuffle.
\end{thm}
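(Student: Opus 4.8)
The plan is to give, for each operation, a direct simulation by a new RA built from the given LRAs, and then to argue that the new RA is again linearly bounded. First I would, by Lemma~\ref{lem-nf}, assume every given LRA is in normal form, so that its final symbol occurs only in a converging state; this is exactly what lets me read off ``acceptance'' of a component only at the very end of a combined computation, without its final symbol surfacing prematurely. Throughout I keep the component background sets mutually disjoint, so reactions of one component never touch objects of another, and I reuse two devices already visible in the normal-form construction: (a) renaming each input symbol $a$ to a primed copy $a'$ by a reaction $(a,\emptyset,a')$, which creates a one-step input buffer and lets a component read its next symbol while simulating its previous step, so the whole simulation is pipelined with only constant trailing delay; and (b) an inhibitor test against the entire input alphabet $\Sigma$, by which a reaction detects that no input symbol is currently present, i.e.\ that the input has been exhausted. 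Since each component is $s(n)$-bounded with $s$ linear, and I only ever add a constant number of new symbols and bounded finite-state bookkeeping, the workspace of every machine I build is bounded by a sum or constant multiple of linear functions, hence linear; this is what keeps all constructions inside $\mathcal{LRA}$.

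I would treat union, intersection and shuffle as variants of running two components side by side. For intersection I run both in parallel, duplicating each input symbol (a reaction $(a,\emptyset,a'_1 a'_2)$ feeds a primed copy to each), let both pipelines run to convergence on disjoint backgrounds, and combine their final symbols by a reaction $(f_1 f_2,\emptyset,f)$; normal form guarantees $f_1,f_2$ appear only once the respective components have converged, so this combining reaction fires exactly when both components have accepted, and its product $f$ then yields the converging state. For union I instead make a single nondeterministic commitment at the start, installing only the chosen component's initial multiset and routing the whole input to it, the other staying inert. Shuffle is the subtle case: I again run both in parallel, but for each input symbol I nondeterministically route it to one component while the other must not move, which I enforce by guarding every reaction of component $i$ with an inhibitor token $\mathit{freeze}_i$, so one token disables the whole component in a single step regardless of how many of its reactions would otherwise fire in the maximally parallel step. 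After the input is exhausted (detected by the $\Sigma$-inhibitor test) both $\mathit{freeze}$ tokens are withheld so the components free-run to convergence, and acceptance is again signalled by $f_1$ and $f_2$.

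For concatenation I guess the split point by a one-way switch: a token flips, nondeterministically and at most once, from a ``phase~1'' to a ``phase~2'' regime; in phase~1 each input symbol is routed to $\mathcal{A}_1$ and in phase~2 to $\mathcal{A}_2$, whose initial multiset is installed by the switching reaction. Because the two components live on disjoint backgrounds, the trailing autonomous computation of $\mathcal{A}_1$ simply proceeds in parallel with the reading of the suffix by $\mathcal{A}_2$, and I accept when both $f_1$ and $f_2$ are present; allowing the switch before any symbol, and never switching at all, covers the boundary cases $w_1=\lambda$ and $w_2=\lambda$. Derivative I handle by a constant-length virtual read: for the left derivative by $a$ I prepend a buffered pre-step that simulates feeding $a$ before the real input begins, and for the right derivative I use the $\Sigma$-inhibitor test to feed a virtual $a$ once the real input ends; either way the overhead is constant and the bound stays linear.

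For $\lambda$-free morphisms and, more generally, $\lambda$-free gsm mappings, the new machine reads the output word $v$ and must reconstruct an input $w$ for $\mathcal{A}$ together with a computation producing $v$. I would carry the gsm's finite control (a morphism being the single-state case) in a bounded number of auxiliary symbols, and advance $\mathcal{A}$ exactly once per reconstructed input symbol using the same $\mathit{freeze}$-guard: $\mathcal{A}$ is frozen while the matcher is part-way through an output block and ticked at each block boundary, then free-run after $v$ is exhausted. Since the mapping is $\lambda$-free each block contributes at least one output symbol, so $|w|\le |v|$ and the workspace of $\mathcal{A}$, linear in $|w|$, is linear in $|v|$. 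The step I expect to be the main obstacle is precisely this synchronization of the input-less steps: making each component advance on exactly the intended steps and sit frozen on the others, so that a component whose inputs are non-consecutive or delayed---as in shuffle, in the suffix phase of concatenation, and inside a reconstructed block---behaves exactly as it would on its own consecutive input, and then releasing all components to converge, all without ever exceeding linear workspace. The inhibitor-based freeze, together with the normal-form guarantee that final symbols surface only at convergence, are the two ingredients that make this bookkeeping go through cleanly.
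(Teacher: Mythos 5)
Your plan follows essentially the same route as the paper's proof: normal form via Lemma~\ref{lem-nf}, parallel simulation of the component LRAs on disjoint background sets with input symbols renamed to marked copies and routed by dedicated reactions, inhibitor tests against $\Sigma$ to detect end of input, inhibitor-based freezing so that an idle component takes no spurious steps, a combining reaction such as $(f_1f_2,\emptyset,f)$ whose soundness rests exactly on the normal-form guarantee, and constant-overhead bookkeeping to preserve linear workspace. Two details differ genuinely, and both of your variants are legitimate: for union the paper runs both components simultaneously and converts either $f_1$ or $f_2$ into $f$, whereas you commit nondeterministically to one component at the start (equivalent, since acceptance is existential over interactive processes); for $\lambda$-free gsm-mappings the paper does not track the position inside an output block in the finite control, but instead accumulates each block as a multiset using the exponential encoding $stm$, doubling pending symbols with reactions $(b,\emptyset,b^2)$ so that the unordered multiset determines the block string uniquely --- your position-tracking state tokens achieve the same with less machinery, since block lengths are bounded. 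Your shuffle ``freeze'' tokens are the paper's device in disguise: there, the pending marked symbol $a^{(j)}$ itself appears as an inhibitor ($\Sigma_j$) in every reaction of the other component.

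One concrete slip: in concatenation, ``never switching at all'' does \emph{not} cover $w_2=\lambda$. If the switch never fires, $\mathcal{A}_2$'s initial multiset is never installed, so $f_2$ never appears and the word is rejected even when $\lambda\in L(\mathcal{A}_2)$; you need a switching reaction that can fire without consuming an input symbol, e.g.\ one guarded by the $\Sigma$-inhibitor test so the switch may occur after the input is exhausted. (For fairness: the paper's own switch rule $(p_1a,\emptyset,p_2a^{(2)}+h_2(D^{(2)}_0))$ also consumes an input symbol and glosses over the same boundary case.) A second point needing care, though fixable with your tools, is the left derivative: the virtual prefix must become consumable \emph{strictly before} the first buffered real input symbol, otherwise the simulated machine sees both at once and can fire reactions the original could not. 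Feeding a whole string $x$ this way forces the paper into a multi-level promotion pipeline $b^{(1)}\rightarrow b^{(2)}\rightarrow\cdots\rightarrow b^{(n)}$ that delays real symbols; your single-symbol-and-iterate version avoids that pipeline only if the virtual symbol is placed directly in the initial multiset rather than produced by a first-step reaction.
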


\begin{proof}
Let $\mathcal{A}_1 = (S_1, \Sigma, A_1, D^{(1)}_{0}, f_1)$ and $\mathcal{A}_2 = (S_2, \Sigma, A_2, D^{(2)}_{0}, f_2)$ be LRAs in normal form with $(S_1 - \Sigma) \cap (S_2 - \Sigma) = \emptyset$. Moreover, let $\Sigma_1 = \{ a^{(1)} \, | \, a \in \Sigma \}$, $\Sigma_2 = \{ a^{(2)} \, | \, a \in \Sigma \}$,  $h_1: {S_1}^{\#} \rightarrow {S_1}^{\#}$ and $h_2: {S_2}^{\#} \rightarrow {S_2}^{\#}$ are defined as follows:
\begin{eqnarray*}
\left\{ \begin{array}{ll}
h_i(a) = a^{(i)} \qquad \, (\text{for }a \in \Sigma),  \\
h_i(a) = a \qquad \quad (\text{for }a \in S_i - \Sigma),
\end{array} \right.
\end{eqnarray*}
for $i \in \{ 1,2 \}$. It is important in the proof of ``union'', ``intersection'', ``concatenation'' and ``shuffle'' parts, that $h_1(S_1)$ and $h_2(S_2)$ are disjoint.

[union] We construct an RA $\mathcal{A} = (S, \Sigma, A, D_0, f)$ as follows:
\begin{align*}
&S = S_1 \cup S_2 \cup \Sigma_1 \cup \Sigma_2 \cup \{ d,f \}, \\
&A = \{ (h_i(R), h_i(I) \cup \{ f \}, h_i(P)) \, | \, ( R, I, P ) \in A_i, i \in \{ 1,2 \} \} \\
&\qquad \cup \{ (a, \emptyset, a^{(1)} a^{(2)} ) \, | \, a \in \Sigma \} \\
&\qquad \cup \{ (f_i, \emptyset, f) \, | \,  i \in \{ 1,2 \} \} \\
&\qquad  \cup \{ (d, \emptyset, h_1(D^{(1)}_{0}) + h_2(D^{(2)}_{0}) \},  \\
&D_0 = d.
\end{align*}
Let $w  = a_1 \cdots a_n$ and let $m = min \{ m_1, m_2 \}$, $m_1,m_2 \ge 0$. Then, for $i = 1$ or $i = 2$, there exists an interactive process $\pi_i = D^{(i)}_{0}, \ldots, D^{(i)}_{m_i} \in IP^a(\mathcal{A}_i, w)$ which converges on $D^{(i)}_{m_i}$ if and only if there exists $\pi = D_0, \ldots, D_{m+2} \in IP^a(\mathcal{A}, w)$ such that
\begin{eqnarray*}
\left\{ \begin{array}{ll}
D_{k+1} = h_1(D^{(1)}_{k}) + h_2(D^{(2)}_{k}) + a^{(1)}_{k+1} a^{(2)}_{k+1} \qquad (\text{for } 0 \le k \le n-1),  \\
D_{k+1} = h_1(D^{(1)}_{k}) + h_2(D^{(2)}_{k}) \qquad \qquad \qquad \ \ \ \, (\text{for } n \le k \le m).
\end{array} \right.
\end{eqnarray*}
(Note that either $D^{(1)}_{m}$ or $D^{(2)}_{m}$ includes $f_1$ and $f_2$, respectively, if and only if $D_{m+2}$ includes $f$.) 

Hence, it holds that $L(\mathcal{A}) = L(\mathcal{A}_1) \cup L(\mathcal{A}_2)$ and the workspace of $\mathcal{A}$ is linear-bounded.

[intersection] In the LRA $\mathcal{A}$ constructed in the proof of ``union'' part,  we replace (i) 
$\{(f_i, \emptyset, f) \, | \,  i \in \{ 1,2 \} \}$ by $\{ (f_1 f_2, \emptyset, f) \}$, and 
(ii) $m= min \{ m_1, m_2 \}$ by $m' = max \{ m_1, m_2 \}$. Then, it is easily seen that  that 
$L(\mathcal{A}) = L(\mathcal{A}_1) \cap L(\mathcal{A}_2)$ holds.

[concatenation] We construct an RA $\mathcal{A} = (S, \Sigma, A, D_0, f)$ as follows:
\begin{align*}
&S = S_1 \cup S_2 \cup \Sigma_1 \cup \Sigma_2 \cup \{ p_1, p_2, d, f \}, \\
&A = \{ (h_i(R), h_i(I) \cup \{ f \}, h_i(P)) \, | \, ( R, I, P ) \in A_i, i \in \{ 1,2 \} \} \\
&\qquad \cup \{ (a, p_2, a^{(1)}) \, | \, a \in \Sigma \}  \cup \{ (d, \emptyset, h_1(D^{(1)}_{0}) ) \} \\
&\qquad \cup \{ (a, p_1, a^{(2)} \, | \, a \in \Sigma \}  \cup \{ (p_1 a, \emptyset, p_2 a^{(2)} + h_2(D^{(2)}_{0}) ) \, | \, a \in \Sigma \} \\
&\qquad \cup \{ (f_1 f_2, \emptyset, f) \},  \\
&D_0 = dp_1.
\end{align*}
Let $w_1, w_2 \in \Sigma^*$ with $|w_1|=n_1$, $|w_2|=n_2$ and $w_1 w_2 = a_1 \cdots a_n$. Then, for $i=1$ and $i=2$, there exists an interactive process $\pi_i = D^{(i)}_{0}, \ldots, D^{(i)}_{m_i} \in IP^a(\mathcal{A}_i, w_i)$ which converges on $D^{(i)}_{m_i}$ if and only if there exists $\pi = D_0, \ldots, D_{m_1 + m_2 +2} \in IP^a(\mathcal{A}, w_1 w_2)$ such that
\begin{eqnarray*}
\left\{ \begin{array}{ll|}
(i) \, D_{k+1} = h_1(D^{(1)}_{k}) + p_1 + a^{(1)}_{k+1} \qquad \qquad \qquad \qquad \ \ \ (\text{for } 0 \le k \le n_1 -1),  \\
(ii) \, D_{k+1} = h_1(D^{(1)}_{k}) + h_2(D^{(2)}_{k - n_1}) + p_2 + a^{(2)}_{k+1} \qquad \ \ \  (\text{for } n_1 \le k \le n-1), \\
(iii) \, D_{k+1} = h_1(D^{(1)}_{k}) + h_2(D^{(2)}_{k - n_1}) + p_2 \qquad \qquad \quad \ \ \, (\text{for } n \le k \le m_1 + m_2).
\end{array} \right.
\end{eqnarray*}
Note that for $D_k$ in $(i)$, a rule in $\{ (a, p_1, a^{(2)}) \, | \, a \in \Sigma \}$ and $\{ (p_1 a, \emptyset, p_2 a^{(2)} + h_2(D^{(2)}_{0}) ) \, | \, a \in \Sigma \}$ is nondeterministically chosen to be applied in the next step. If a rule in $\{ (a, p_1, a^{(2)}) \, | \, a \in \Sigma \}$ is chosen, $D_{k+1}$ is in $(ii)$. 

Hence, it holds that $L(\mathcal{A}) = L(\mathcal{A}_1) \cdot L(\mathcal{A}_2)$ and the workspace of $\mathcal{A}$ is linear-bounded.

[shuffle] We construct an RA $\mathcal{A} = (S, \Sigma, A, D_0, f)$ as follows:
\begin{align*}
&S = S_1 \cup S_2 \cup \Sigma_1 \cup \Sigma_2 \cup \{ d,f \}, \\
&A = \{ (h_i(R), h_i(I) \cup \Sigma_j \cup \{ f \}, h_i(P)) \, | \, ( R, I, P ) \in A_i, i,j \in \{ 1,2 \}, i \ne j \} \\
&\qquad \cup \{ (a, \emptyset, a^{(i)}) \, | \, a \in \Sigma, i \in \{ 1,2 \} \} \\
&\qquad \cup \{ (d, \emptyset, h_1(D^{(1)}_{0}) + h_2(D^{(2)}_{0}) ) \} \cup \{ (f_1 f_2, \emptyset, f) \},  \\
&D_0 = d.
\end{align*}
Let $w_1,w_2 \in \Sigma^*$ with $|w_1|=n_1$, $|w_2|=n_2$ and let $w = a_1 \cdots a_n \in shuf(w_1, w_2)$. Then, for $i = 1$ and $i = 2$, there exists an interactive process $\pi_i = D^{(i)}_{0}, \ldots, D^{(i)}_{m_i} \in IP^a(\mathcal{A}_i, w_i)$ which converges on $D^{(i)}_{m_i}$ if and only if there exists $\pi = D_0, \ldots, D_{m_1 + m_2 +2} \in IP^a(\mathcal{A}, w)$ such that
\begin{eqnarray*}
\left\{ \begin{array}{ll}
D_{k+1} = h_1(D^{(1)}_{k'}) + h_2(D^{(2)}_{k-k'}) + a^{(i)}_{k+1} \qquad \qquad \, (\text{for } 0 \le k \le n-1),  \\
D_{k+1} = h_1(D^{(1)}_{k- n_2}) + h_2(D^{(2)}_{(k - n_1}) \qquad \qquad \qquad (\text{for } n \le k \le m),
\end{array} \right.
\end{eqnarray*}
where $i=1$ or $i=2$ and $0 \le k' \le k$. Note that $i = 1$ ($i = 2$) means that only $\pi_1$ (resp. $\pi_2$) advances to the next step and the value of $k'$ (resp. $k-k'$) is increased by one. 
 
Hence, it holds that $L(\mathcal{A}) = Shuf(L(\mathcal{A}_1), L(\mathcal{A}_2))$ and the workspace of $\mathcal{A}$ is linear-bounded.

[right derivative]  For an LRA $\mathcal{A} = (S, \Sigma, A, D_0, f)$ in normal form and $x = a_1 \cdots a_n \in \Sigma^+$, construct an RA $\mathcal{A'} = (S', \Sigma, A', D'_0, f')$ and a mapping $h:S'^{\#} \rightarrow S'^{\#}$ as follows:
\begin{align*}
&S' = S \cup \Sigma' \cup Q \cup \{ f' \}, \text{ where } \Sigma' = \{ a' | a \in \Sigma \}, Q = \{ q_i \, | \, 0 \le i \le n \}, \\
&A' = \{ ( h(R), h(I) \cup \{ f' \}, h(P) ) \, | \, ( R, I, P ) \in A \} \\
&\qquad \cup \{ ( a, \emptyset, a' ) \, | \, a \in \Sigma \}  \\
&\qquad \cup \{ ( q_i, \Sigma, a'_{i+1} q_{i+1} ) \, | \, 0 \le i \le n-1 \}  \\
&\qquad \cup \{ ( fq_n, \Sigma , f' ) \}, \\
&D'_0 = h(D_0) + q_0,
\end{align*}
and
\begin{eqnarray*}
\left\{ \begin{array}{ll}
h(a) = a' \qquad (\text{for }a \in \Sigma),  \\
h(a) = a \qquad \, (\text{for }a \in S' - \Sigma).
\end{array} \right.
\end{eqnarray*}
Let $wx \in \Sigma^*$ with $w = b_1 \cdots b_l$, $l \ge 1$. Then, there exists an interactive process $\pi = D_{0}, \ldots, D_{m} \in IP^a(\mathcal{A}, wx)$ which converges on $D_{m}$ if and only if there exists $\pi' = D'_0, \ldots, D'_{m + 2} \in IP^a(\mathcal{A}', w)$ such that
\begin{eqnarray*}
\left\{ \begin{array}{ll||}
D'_{k+1} = h(D_{k}) + q_0 b'_{k+1} \qquad \qquad \ \ (\text{for } 0 \le k \le l-1),  \\
D'_{k+1} = h(D_{k}) + q_{k-l+1} a'_{k-l+1} \qquad (\text{for } l \le k \le l+n-1 ), \\
D'_{k+1} = h(D_{k}) + q_n \qquad \qquad \qquad   \ (\text{for } l+n \le k \le m).
\end{array} \right.
\end{eqnarray*}
Hence, it holds that $L(\mathcal{A}) / x = L(\mathcal{A'})$ and the workspace of $\mathcal{A'}$ is linear-bounded.

[left derivative] Let $\mathcal{A} = (S, \Sigma, A, D_0, f)$ be an LRA in normal form and $x = a_1 \cdots a_n \in \Sigma^+$ and $\Sigma_i = \{ a^{(i)} \, | \, a \in \Sigma \}$ for $1 \le i \le n$, $Q = \{ q_i \, | \, 0 \le i \le n \}$.  Construct an RA $\mathcal{A'} = (S', \Sigma, A', D'_0, f)$ and a mapping $h_n :S'^{\#} \rightarrow S'^{\#}$ as follows:
\begin{align*}
&S' = S \cup (\bigcup_{1 \le i \le n}  \Sigma_i) \cup Q \cup \{ d \}, \\
&A' = \{ ( h_n(R), h_n(I), h_n(P) ) \, | \, ( R, I, P ) \in A \} \\
&\qquad \cup \{ ( q_i, \emptyset, a^{(n)}_{i+1} q_{i+1}  ) \, | \, 0 \le i \le n -1 \}  \\
&\qquad \cup \{ ( a, \emptyset, a^{(1)} ) \, | \, a \in \Sigma \}, \\
&\qquad \cup \{ ( a^{(i)}, \emptyset, a^{(i+1)} ) \, | \, a \in \Sigma, 1 \le i \le n -1, n \ge 2 \}, \\
&\qquad \cup \{ ( d, \emptyset, h_n(D_0) ) \}, \\
&D'_0 = dq_0,
\end{align*}
and
\begin{eqnarray*}
\left\{ \begin{array}{ll}
h_n(a) = a^{(n)} \qquad (\text{for }a \in \Sigma),  \\
h_n(a) = a \qquad \quad (\text{for }a \in S' - \Sigma).
\end{array} \right.
\end{eqnarray*}
Let $xw \in \Sigma^*$ with $w = b_1 \cdots b_l$, $l \ge 1$.  Then, there exists an interactive process $\pi = D_{0}, \ldots, D_{m} \in IP^a(\mathcal{A}, xw)$ which converges on $D_{m}$ if and only if there exists $\pi' = D'_0, \ldots, D'_{m + 1} \in IP^a(\mathcal{A}', w)$ such that
\begin{eqnarray*}
\left\{ \begin{array}{ll|}
D'_{k+1} = h_n(D_{k}) + q_{k+1} a^{(n)}_{k+1} b^{(k)}_1 b^{(k-1)}_2 \cdots b^{(1)}_{k}  \qquad \quad \ \  (\text{for } 1 \le k \le n-1 \text{ and } n \ge 2),  \\
D'_{k+1} = h_n(D_{k}) + q_{n} b^{(n)}_{k-n+1} b^{(n-1)}_{k-n} \cdots b^{(1)}_{k} \qquad \qquad \quad (\text{for } n \le k \le l+n -1), \\
D'_{k+1} = h_n(D_{k}) + q_{n}  \qquad \qquad \qquad \qquad \qquad \qquad \quad \,  (\text{for } l+n \le k \le m).
\end{array} \right.
\end{eqnarray*}
Hence, it holds that $x \verb|\| L(\mathcal{A})  = L(\mathcal{A'})$ and the workspace of $\mathcal{A'}$ is linear-bounded.

[$\lambda$-free gsm-mappings] For an LRA $\mathcal{A} = (S, \Sigma, A, D_0, f)$ in normal form and a gsm-mapping $g = (Q, \Sigma, \Delta, \delta, p_0, F)$, construct an RA $\mathcal{A'} = (S', \Delta, A', D'_0, f')$ and a mapping $h: S'^{\#} \rightarrow S'^{\#}$  as follows:
\begin{align*}
&S' = S \cup \Sigma' \cup \Delta \cup Q \cup \{ c, d, f' \}, \text{ where } \Sigma' = \{ a' \, | \, a \in \Sigma \}, \\
&A' = \{ ( h(R), h(I) \cup \{ c,f' \}, h(P) ) \, | \, ( R, I, P ) \in A \} \\
&\qquad \cup \{ ( b, \emptyset, b^2 ) \, | \, b \in \Delta \}  \\
&\qquad \cup \{ ( pc + stm(x), \emptyset, qda ) \, | \, (q,x) \in \delta(p,a) \}  \\
&\qquad \cup \{ ( pd + stm(x), \emptyset, qda ) \, | \, (q,x) \in \delta(p,a), |x|=1 \}  \\
&\qquad \cup \{ ( f''f, \Sigma \cup \{ c,d \}, f' ) \, | \, f'' \in F \} \\
&\qquad \cup \{ ( c, \emptyset, c ) \} \cup \{ ( d, \emptyset, c ) \} \cup \{ ( d, \Sigma, \lambda ) \}, \\
&D'_0 = h(D_0) + cp_0.
\end{align*}
and
\begin{eqnarray*}
\left\{ \begin{array}{ll}
h(a) = a' \qquad (\text{for }a \in \Sigma),  \\
h(a) = a \qquad \, (\text{for }a \in S' - \Sigma).
\end{array} \right.
\end{eqnarray*}

Then, for an input $w= a_1 \cdots a_n$,  there exists  $\pi: D_0, D_1, \ldots, D_m \in  IP^a(\mathcal{A}, w)$ which converges on $D_m$, and  
$g(w)=b_1 \cdots b_{n'}$, where $(p_1, b_1 \cdots b_t) \in \delta(p_0,a_{1})$,   if and only if  there exists the interactive process $\pi'$ in $\mathcal{A}'$ such that
\begin{align*}
D'_0 &\rightarrow^{b_1} h(D_0) + cp_0 b^2_1 \rightarrow^{b_2} \cdots \\
&\rightarrow^{b_{t-1}} h(D_0) + cp_0 + stm(b_1 b_2 \cdots b_{t}) -b_t \\
&\rightarrow^{b_{t}} h(D_0) + dp_0  a'_1 \\
&\rightarrow^{b_{t+1}} h(D_1) + cp_1  b^2_{t+1} \rightarrow^{b_{t+2}} \cdots \\
&(\text{or } \rightarrow^{b_{t+1}} h(D_1) + dp_1 a'_2 \rightarrow^{b_{t+2}} \cdots \text{ if } (q, b_{t+1}) \in \delta(p_1,a_{2})) \\
&\rightarrow^{b_{n'}} h(D_{n-1}) + df'' a'_n \\
&\rightarrow h(D_{n}) + f'' \\
&\rightarrow h(D_{n}) - f +f'(=D'_q)
\end{align*}
and $D'_q$ is a converging state in $\mathcal{A'}$.  
Hence, it holds that $g(L(\mathcal{A})) = L(\mathcal{A'})$ and the workspace of $\mathcal{A'}$ is linear-bounded.

[$\lambda$-free morphisms] Since $\mathcal{LRA}$ is closed under $\lambda$-free gsm-mappings, it is also closed under $\lambda$-free morphisms.
\end{proof}

In order to prove some of the negative closure properties of $\mathcal{LRA}$, the following two lemmas are of crucially importance.

\begin{lem}{\rm (Lemma 1 in \cite{OKY:12})}
For an alphabet $\Sigma$ with $|\Sigma| \ge 2$, let $h:\Sigma^* \rightarrow \Sigma^*$ be an injection such that for any $w \in \Sigma^*$, $|h(w)|$ is bounded by a polynomial of $|w|$. Then, there is no PRA $\mathcal{A}$ such that $L(\mathcal{A}) = \{ wh(w) \, | \,  w \in \Sigma^* \}$. 
\label{lem-ww}
\end{lem}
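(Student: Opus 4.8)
The plan is to argue by contradiction, exploiting the tension between two facts: a polynomially‑bounded RA has, at any reachable multiset, only polynomially many possible configurations, whereas the language $\{wh(w)\mid w\in\Sigma^*\}$ forces the device to ``remember'' which of exponentially many prefixes $w$ it has just read. Suppose a PRA $\mathcal A=(S,\Sigma,A,D_0,f)$ with $|S|=k$ and workspace bound $p$ satisfies $L(\mathcal A)=L:=\{wh(w)\}$, and let $q$ be the polynomial with $|h(w)|\le q(|w|)$.

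First I would isolate the configuration bottleneck. Fix two distinct letters $a,b\in\Sigma$ and a length parameter $\ell$, and consider the $2^\ell$ words $w\in\{a,b\}^\ell$. For each such $w$ choose a \emph{minimal‑workspace} accepting process $\pi_w\in IP^a(\mathcal A,wh(w))$ and let $C_w=D_\ell$ be its configuration right after the $\ell$ input symbols of $w$ have been consumed (and just before $h(w)$ begins). Since $|wh(w)|=\ell+|h(w)|\le \ell+q(\ell)$ and $\pi_w$ is minimal, every multiset in $\pi_w$, in particular $C_w$, has weight at most $p(\ell+q(\ell))$. The number of multisets over $S$ of weight at most $W$ is $\binom{W+k}{k}$, a polynomial in $W$ of degree $k$; hence the number $M(\ell)$ of possible values of $C_w$ is polynomial in $\ell$. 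By pigeonhole some single configuration $C$ has preimage $W'=\{w\in\{a,b\}^\ell\mid C_w=C\}$ of size at least $2^\ell/M(\ell)$, which is super‑polynomial in $\ell$.

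The second ingredient is a splicing observation: because an interactive process depends only on the current multiset and the next input symbol, for any $w_1,w_2\in W'$ the prefix of $\pi_{w_1}$ up to $C$, followed by the suffix of $\pi_{w_2}$ from $C$ (which reads $h(w_2)$ and converges on a state with $f\subseteq D_m$), is a valid accepting process on input $w_1h(w_2)$. Thus $w_1h(w_2)\in L(\mathcal A)=L$ for \emph{all} $w_1,w_2\in W'$. To obtain a contradiction I must therefore exhibit $w_1\ne w_2$ in $W'$ with $w_1h(w_2)\notin L$. I expect this to be the crux: the decomposition $w_1h(w_2)=vh(v)$ need not cut at position $\ell$, so injectivity of $h$ alone does not immediately force $w_1=w_2$.

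I would settle this by counting ``bad'' pairs, where $(w_1,w_2)$ is bad if $w_1h(w_2)=vh(v)$ for some $v$. If $|v|=\ell$ then $v=w_1$ and $h(w_1)=h(w_2)$, so injectivity gives $w_1=w_2$; hence every bad pair of distinct words has $|v|\ne\ell$. If $|v|<\ell$, then $v$ is one of the at most $\ell$ prefixes of $w_1$, and fixing $v$ and $w_1$ pins down $h(w_2)$ as a determined suffix of $h(v)$, so $w_2$ is determined by injectivity; thus each $w_1$ has at most $\ell$ bad partners of this kind. Symmetrically, if $|v|>\ell$, then the first $|v|-\ell$ symbols of $h(w_2)$ form a nonempty prefix $p$ of $h(w_2)$ with $v=w_1p$, and $p$ determines $w_1p=h^{-1}(\text{the matching suffix of }h(w_2))$, hence $w_1$; so each $w_2$ has at most $|h(w_2)|\le q(\ell)$ bad partners of this kind. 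Altogether the number of bad ordered pairs inside $W'\times W'$ is at most $(\ell+q(\ell))\,|W'|$, whereas the number of ordered pairs of distinct elements is $|W'|(|W'|-1)$. Since $|W'|$ is super‑polynomial, for all large $\ell$ a good pair $w_1\ne w_2$ exists, and then $w_1h(w_2)\in L(\mathcal A)\setminus L$, the desired contradiction. The hard part is exactly this last step — taming the ambiguous decompositions — and the remedy is that injectivity of $h$ keeps bad partners scarce (at most linearly/polynomially many per word), so they cannot exhaust the super‑polynomially large class $W'$.
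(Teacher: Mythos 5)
Your proof is correct, and it shares the paper's overall skeleton: bound the number of reachable multisets by a polynomial, pigeonhole over the exponentially many inputs of length $\ell$, splice two computations at a shared configuration, and play this off against injectivity of $h$. However, your execution differs in two substantive ways, and both differences matter. First, the paper pigeonholes on the \emph{set-valued} map $w \mapsto I(w)$ (all configurations reachable right after reading $w$), proves that distinct words give incomparable sets, and then asserts $|\{ I(w) \mid w \in \Sigma^n \}| \le |\mathcal{D}_{p(n)}|$; as stated this bound is unjustified, since a pairwise-incomparable family of subsets of an $N$-element set can have size $\binom{N}{\lfloor N/2 \rfloor}$, far larger than $N$. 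You instead pigeonhole on a \emph{single} configuration $C_w$ taken from a fixed minimal-workspace accepting process, which makes the counting step valid with $|\mathcal{D}_{p(n)}|$ pigeonholes. Second --- and this is the crux you yourself flagged --- the paper passes from ``$w_2 h(w_1) \in L$'' directly to ``$h(w_1) = h(w_2)$,'' silently assuming that the witnessing decomposition $v h(v)$ of $w_2 h(w_1)$ must cut at position $\ell$; nothing forces this. Your bad-pair analysis (at most $\ell$ partners per word from decompositions with $|v| < \ell$, at most $q(\ell)$ per word from $|v| > \ell$, both controlled by injectivity of $h$) is precisely what is needed to tame the ambiguous cuts, and it explains why you need the pigeonhole class $W'$ to be super-polynomially large rather than merely of size two, a requirement that is invisible in the paper's version. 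In short: same strategy, but your write-up closes two genuine gaps in the paper's own argument, at the modest cost of a longer counting step.
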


\begin{lem} 
$L_1 = \{w_1 w_2 \ |  \ w_1, w_2 \in \{ a,b \}^*, \   w_1 \ne w_2  \} \in \mathcal{LRA}$. 
\label{lem-nww}
\end{lem}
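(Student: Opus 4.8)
The plan is to reduce membership in $L_1$ to a single length-balancing check that an LRA can perform with linear workspace. Recall that $L_1$ consists exactly of the even-length strings that are not squares: writing $w=w_1w_2$ with $|w_1|=|w_2|=n$, we have $w_1\neq w_2$ iff the two halves disagree at some aligned position. Equivalently, $w\in L_1$ iff $w$ can be written as $w=u\,\sigma\,m\,\tau\,v$ with $\sigma,\tau\in\{a,b\}$, $\sigma\neq\tau$, and $|u|+|v|=|m|$; here $\sigma$ and $\tau$ are the symbols sitting at aligned positions $j$ and $j+n$ in the two halves, so the segment $m$ strictly between them has length $n-1=|u|+|v|$. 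Thus accepting $w$ amounts to (i) nondeterministically guessing the two pivot occurrences, (ii) checking that the guessed symbols differ, and (iii) verifying the one linear equation $|u|+|v|=|m|$.

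First I would realize the scan with three modes $q_u,q_m,q_v$ recorded as objects in the background set, together with two \emph{counter} objects $+$ and $-$. While in $q_u$ the automaton absorbs each incoming input symbol and emits a token $+$; at a nondeterministically chosen step it instead reads the first pivot, stores its value $\sigma$, and moves to $q_m$ (the pivot itself is not counted). In $q_m$ each absorbed symbol emits a token $-$; at a nondeterministically chosen step it reads the second pivot $\tau$, using only reactions in which $\tau\neq\sigma$, and moves to $q_v$. In $q_v$ each absorbed symbol again emits $+$. A single annihilation reaction $(+\,-,\,\emptyset,\,\lambda)$, applied in the maximally parallel manner, continuously cancels matched pairs, so that at every moment the surviving tokens encode the signed quantity $(|u|+|v|)-|m|$. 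After the input is exhausted the interactive process keeps running on the current multiset, annihilation finishes in finitely many steps, and the balance $|u|+|v|=|m|$ holds precisely when neither $+$ nor $-$ survives. Acceptance is then signalled by a reaction that produces $f$ with inhibitor $\{+,-\}$ (and guarded by being in $q_v$), so that $f$ can appear only when the guessed decomposition is complete and balanced.

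The step I expect to be the main obstacle is ruling out \emph{premature} acceptance, because the RA cannot locally detect the true end of the input: since each input symbol is fully absorbed within its own step, a configuration with balanced counters and no raw input symbol present can occur strictly before position $|w|$, and we must prevent $f$ from being created (and surviving to the converging state) in that situation. I would handle this exactly as in the normal-form construction of Lemma~\ref{lem-nf}: whenever a fresh input symbol is scanned a blocking object is created that inhibits the $f$-producing reaction, and any input arriving after $f$ has been guessed forces $f$ to be destroyed; hence $f$ survives to the converging state iff the guess of ``end of input'' was correct and the counters are balanced, i.e.\ iff $w\in L_1$. Finally, the workspace is clearly linear: at every step the multiset contains $O(1)$ mode and guard objects, at most one input symbol, and at most $|w|$ counter tokens (one created per scanned symbol and only removed by annihilation), so $|D_i|=O(|w|)$ and the device is an LRA. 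Alternatively one may present $L_1$ as $O_aO_b\cup O_bO_a$, where $O_c=\{x\,c\,y\mid |x|=|y|\}$, build an LRA for each single-pivot balance language $O_a,O_b$ by the same $+/-$ technique, and then invoke closure of $\mathcal{LRA}$ under concatenation and union proved above; the balance check remains the crux either way.
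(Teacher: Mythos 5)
Your core construction is essentially the paper's own proof. The paper likewise guesses the two pivot occurrences nondeterministically (reactions ${\bf a}_3,{\bf a}_4$ store the first pivot as a primed symbol; ${\bf a}_{11},{\bf a}_{12}$ read the second pivot and enforce $s\ne t$), verifies the alignment with counters emitted while scanning and consumed afterwards, and signals acceptance by a reaction $(p_3, abc_2, f)$. Your single signed balance (tokens $+,-$ with an annihilation rule checking $|u|+|v|=|m|$) is an equivalent reformulation of the paper's two counters $c_1,c_2$ checking $|u_1|=|v_1|$ and $|u_2|=|v_2|$; both conditions say exactly that the two pivots are $|w|/2$ positions apart. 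Your worry about premature acceptance is also legitimate: with the $f$-rule inhibited only by $\{+,-\}$, on input $abab$ the machine could guess pivots at positions $1,2$, emit $f$ while the second half is still arriving, and converge with $f$ present, falsely accepting a square. But the remedy is simpler than the Lemma~\ref{lem-nf} machinery you invoke: it suffices to put the raw input letters $a,b$ into the inhibitor of the $f$-producing reaction (exactly what the paper's ${\bf a}_{15}$ does), since during input feeding every reacting multiset contains a raw letter, so $f$ can only be created after the entire input has been fed. Your blocking-object/destroying-reaction scheme also works, it is just heavier than needed.

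The genuine gap is your opening sentence: $L_1$ does not consist only of the even-length non-squares. As the lemma is used in Theorem 2, $\bar{L_1}=\{w_1w_2 \mid w_1=w_2\}$ is the square language, so $L_1$ is the complement of $\{ww \mid w\in\{a,b\}^*\}$ within $\{a,b\}^*$ and therefore contains every odd-length string as well. Both of your constructions (the direct one and the alternative $O_aO_b\cup O_bO_a$) accept only even-length strings, hence a strictly smaller language than $L_1$. The repair is the one line the paper adds at the end of its proof: writing $L$ for the even-length non-square language you construct, $L_1 = L \cup \{w \in \{a,b\}^* \mid |w| \text{ odd}\}$, and since the odd-length language is regular, $\mathcal{REG}\subseteq\mathcal{LRA}$ (Proposition~\ref{prop-ra}) and $\mathcal{LRA}$ is closed under union (Theorem 1), we get $L_1\in\mathcal{LRA}$. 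With that sentence added, your argument is complete.
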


\begin{proof}
Let $L = \{ u_1 s u_2 v_1 t v_2 \ | \ u_1, u_2, v_1, v_2 \in \{ a, b \}^*, |u_1|=|v_1|, |u_2|=|v_2|, s,t \in \{ a, b \}, s \ne t \}$ and $\mathcal{A} = (S, \Sigma, A, D_0, f)$ be an LRA defined as follows:
\begin{align*}
&S = \{ a, b, a', b', c_1, c_2, p_0, p_1, p_2, p_3, f \} \mbox{ with }  \Sigma=\{a,b\},  \\
&A = \{ {\bf a}_1, {\bf a}_2, {\bf a}_3, {\bf a}_4, {\bf a}_5, {\bf a}_6, {\bf a}_7, {\bf a}_8, {\bf a}_9, {\bf a}_{10}, {\bf a}_{11}, {\bf a}_{12}, {\bf a}_{13}, {\bf a}_{14}, {\bf a}_{15}  \}, \ \mbox{where} \\
&\quad  {\bf a}_1 = ( p_0 a, \emptyset, p_0 c_1 ),\ \,{\bf a}_2 = ( p_0 b, \emptyset, p_0 c_1 ),\ \, {\bf a}_3 = ( p_0 a, \emptyset, p_1 a' ), \ \,{\bf a}_4 = ( p_0 b, \emptyset, p_1 b' ), \\
&\quad  {\bf a}_5 = ( p_1 a, \emptyset, p_1 c_2 ),\ \,{\bf a}_6 = ( p_1 b, \emptyset, p_1 c_2 ),\ \, {\bf a}_7 = ( p_1 a, \emptyset, p_2 c_2 ), \ \,{\bf a}_8 = ( p_1 b, \emptyset, p_2 c_2 ), \\
&\quad  {\bf a}_9 = ( p_2 a c_1, \emptyset, p_2 ),\ \,{\bf a}_{10} = ( p_2 b c_1, \emptyset, p_2 ),\ \, {\bf a}_{11} = ( p_2 a' b, c_1, p_3 ), \ \, {\bf a}_{12} = ( p_2 b' a, c_1, p_3 ), \\
&\quad  {\bf a}_{13} = ( p_3 a c_2, \emptyset, p_3 ),\ \,{\bf a}_{14} = ( p_3 b c_2, \emptyset, p_3 ),\ \, {\bf a}_{15} = ( p_3, abc_2, f ),  \\
&D_0 = p_0.
\end{align*}
Let $w = u_1 s u_2 v_1 t v_2 \in L$ be an input string. The string $w$ is accepted by $\mathcal{A}$ in the following manner: 
\begin{enumerate}
\item Applying ${\bf a}_1$ and ${\bf a}_2$, the length of $u_1$ is counted by the number of $c_1$. 
\item Applying ${\bf a}_3$ or ${\bf a}_4$, $s$ is rewritten by $s'$. 
\item Applying ${\bf a}_5$, ${\bf a}_6$, ${\bf a}_7$ and ${\bf a}_8$, the length of $u_2$ is counted by the number of $c_2$. If ${\bf a}_7$ or ${\bf a}_8$ is applied, then the interactive process enters the next step. \item Applying ${\bf a}_9$ and ${\bf a}_{10}$, it is confirmed that $u_1=v_1$ by consuming $c_1$. 
\item Applying ${\bf a}_{11}$ and ${\bf a}_{12}$, it is confirmed that $s \ne t$. 
\item Applying ${\bf a}_{13}$ and ${\bf a}_{14}$, it is confirmed that $u_2=v_2$ by consuming $c_2$.
\end{enumerate}

Therefore, it holds that $L = L(\mathcal{A})$. Note that $L_1 = L \cup \{ w \in \Sigma^* \ | \ |w| = 2n+1, n \ge 0 \}$. Since $\mathcal{LRA}$ is closed under union and includes all regular language, $L_1$ is in $\mathcal{LRA}$.
\end{proof}

\begin{thm}
$\mathcal{LRA}$ is not closed under complementation, quotient by regular languages, morphisms or gsm-mappings.
\end{thm}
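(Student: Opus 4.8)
The plan is to derive all four non-closures from the two impossibility results just established, namely Lemma~\ref{lem-ww} (no PRA accepts $\{w\,h(w)\}$ for a polynomially length-bounded injection $h$) and Lemma~\ref{lem-nww} ($L_1\in\mathcal{LRA}$), together with the positive closures from the preceding theorem. The single most useful positive fact will be closure under intersection, which, combined with $\mathcal{REG}\subset\mathcal{LRA}$ from Proposition~\ref{prop-ra}, yields closure under intersection with regular sets. For complementation I would start from $L_1=\{w_1w_2\mid w_1,w_2\in\{a,b\}^*,\ w_1\neq w_2\}$ of Lemma~\ref{lem-nww}; the construction there shows that over $\{a,b\}$ this is exactly the set of non-squares, so
\[
\overline{L_1}=\{a,b\}^*\setminus L_1=\{ww\mid w\in\{a,b\}^*\}.
\]
Applying Lemma~\ref{lem-ww} to the identity injection $h=\mathrm{id}$ on $\{a,b\}^*$ (for which $|h(w)|=|w|$) gives $\{ww\}=\{w\,h(w)\}\notin\mathcal{PRA}$, hence $\overline{L_1}\notin\mathcal{LRA}$ since $\mathcal{LRA}\subseteq\mathcal{PRA}$. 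As $L_1\in\mathcal{LRA}$, the class is not closed under complementation.

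For morphisms and gsm-mappings, recall that the preceding theorem already gives closure under $\lambda$-free morphisms, so the obstruction must come precisely from erasing. The plan is to exhibit an LRA language $L$ over $\Sigma\cup\{d\}$ and the erasing morphism $g$ deleting $d$ and fixing $\Sigma$, with $g(L)\notin\mathcal{LRA}$. The target image is a language excluded from $\mathcal{PRA}$ by Lemma~\ref{lem-ww}, say $\{ww\}$; inside $L$ this target is interspersed with a padding symbol $d$ whose occurrences lengthen each word enough that a linearly space-bounded reaction automaton can perform the verification it provably cannot perform on the bare image, after which $g$ strips the padding. Once $g(L)\notin\mathcal{LRA}$ is established, non-closure under gsm-mappings follows automatically: a morphism is realised by the one-state gsm $(\{q_0\},\Sigma\cup\{d\},\Sigma,\delta,q_0,\{q_0\})$ with $\delta(q_0,x)=(q_0,g(x))$, which is a genuine gsm-mapping that is not $\lambda$-free (it outputs $\lambda$ on $d$), consistently with the positive result for $\lambda$-free gsm-mappings.

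For quotient by regular languages I would use the same erasing mechanism but arrange the padding as a trailing block, so that every word of $L$ has the form $u\,d^{k}$ with $u\in\Sigma^*$; to make this possible the image should be a counting-type (order-independent) hard language, since a trailing block can supply missing workspace but cannot compensate for the order-insensitivity of multiset rewriting. For such an $L$ one has
\[
g(L)=(L/d^*)\cap\Sigma^*,
\]
with $d^*$ and $\Sigma^*$ regular. Because $\mathcal{LRA}$ is closed under intersection with regular sets, if it were also closed under quotient by regular languages we would obtain $g(L)\in\mathcal{LRA}$, contradicting the morphism part; hence $\mathcal{LRA}$ is not closed under quotient by regular languages.

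The hard part is the construction of the LRA witness $L$ in the morphism/quotient arguments: the padding must be designed so that a multiset-based, linearly space-bounded device can actually certify the intended property of $g(L)$ even though reactions forget the order of symbols, and this is exactly the capability that the $\lambda$-free positive closures lack. Making the padding simultaneously long enough to furnish the effective workspace, positioned so that order-insensitivity is not fatal to the check, and (for the quotient reduction) removable as a trailing block is where the real effort lies. In particular an order-sensitive target such as $\{ww\}$ fits the morphism and gsm cases via internal padding but resists trailing padding, so the quotient case will likely require a separate counting-type separator; I expect verifying that the padded preimage genuinely lies in $\mathcal{LRA}$ to be the main obstacle, while the reductions to complementation and gsm-mappings are routine.
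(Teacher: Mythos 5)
Your complementation argument is correct and is exactly the paper's: $L_1 \in \mathcal{LRA}$ by Lemma~\ref{lem-nww}, while $\overline{L_1} = \{ww \mid w \in \{a,b\}^*\}$ is excluded from $\mathcal{PRA} \supseteq \mathcal{LRA}$ (Proposition~\ref{prop-ra}) by Lemma~\ref{lem-ww} applied to the identity injection. The genuine gap is in the other three operations. There, everything you propose reduces to a witness you never build: an LRA language over $\Sigma \cup \{d\}$ whose image under the $d$-erasing morphism lies outside $\mathcal{LRA}$. You yourself flag this construction as ``the main obstacle,'' and it is not a routine step: for a reaction automaton to certify a property such as $w_1 = w_2$ it must store a string as a multiset, which forces an encoding of exponential weight (the $stm$ encoding), so the padding must be exponentially long and the automaton must interleave reading the input with doubling/halving bookkeeping under maximal parallelism. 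This is precisely the content of the paper's theorem in Section 6 asserting that for every context-sensitive $L$ there is an LRA accepting $\{c^{2^n}w \mid w \in L,\ |w|=n\}$ (together with its trailing-padding variant), and its proof is a substantial construction. Since that construction is the entire substance of the morphism, gsm-mapping, and quotient claims, your proposal for those three parts is a plan, not a proof.

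The paper closes these three cases by a different and cleaner reduction: its proof cites Corollary~\ref{cor-re-lra} (proved later, in Section 6, from the padding theorem just mentioned combined with the theorem of P\u{a}un, Rozenberg and Salomaa representing $\mathcal{RE}$ via $\mathcal{CS}$), which states that \emph{every} recursively enumerable language is both a quotient of an LRA language by a regular language and a projection of an LRA language. Hence closure of $\mathcal{LRA}$ under quotient by regular languages, under morphisms, or under gsm-mappings would yield $\mathcal{RE} \subseteq \mathcal{LRA} \subset \mathcal{CS}$, a contradiction. This argument needs no case distinction between ``order-sensitive'' and ``counting-type'' targets, handles leading and trailing padding uniformly, and quantifies over all RE languages at once rather than hinging on one hand-crafted witness; to complete your route you would in effect have to prove a statement of the same strength as the padding theorem, at which point you would have reproduced the paper's proof.
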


\begin{proof}
From Lemma \ref{lem-nww}, $L_1 = \{w_1 w_2 \ |  \ w_1, w_2 \in \{ a,b \}^*, \   w_1 \ne w_2  \} \in \mathcal{LRA}$, while from Lemma \ref{lem-ww}, $\bar{L_1} = \{w_1 w_2 \ |  \ w_1, w_2 \in \{ a,b \}^*, \   w_1 = w_2  \} \notin \mathcal{LRA}$. Hence, $\mathcal{LRA}$ is not closed under complementation. From Corollary \ref{cor-re-lra}, it obviously follows that $\mathcal{LRA}$ is not closed under quotient by regular languages, morphisms or gsm-mappings.
\end{proof}

\section{The closure properties of $\lambda$-$\mathcal{LRA}$}

As is seen in the previous section, it remains open whether or not 
the class $\mathcal{LRA}$ is closed under several basic operations 
such as Kleene closures $(+, *)$ or inverse homomorphism. 

In this section, we shall prove that if the $\lambda$-move is allowed in the phase of input mode in the transition process of 
reactions, then  the obtained class of languages ($\lambda$-$\mathcal{LRA}$ introduced below) accepted in that manner shows in turn positive closure properties under  those basic operations.

\begin{de}{\rm 
Let $\mathcal{A} = (S, \Sigma, A, D_0, f)$ be an RA. An interactive process} \it{in the $\lambda$-input mode} \rm{in $\mathcal{A}$ with input $w \in \Sigma^*$ is a sequence  
$\pi = D_0, \ldots, D_i, \ldots$, where $w = a_1 \cdots a_n$ with $a_{i} \in \Sigma \cup \{ \lambda \}$ for $1 \leq i \leq n$,
\begin{eqnarray*}
\left\{ \begin{array}{ll}
 D_{i+1} \in Res_A(a_{i+1}+D_i)   & \mbox{(for $0\leq i \leq n-1$), and} \\
 D_{i+1} \in Res_A(D_i) & \mbox{(for all $i\geq n$)}.
\end{array} \right.
\end{eqnarray*}
By $IP_{\lambda}(\mathcal{A}, w)$ we denote the set of all interactive processes} \it{in the $\lambda$-input mode} \rm{in $\mathcal{A}$ with input $w$.}
\end{de}

\begin{de}{\rm 
Let $\mathcal{A} = (S, \Sigma, A, D_0, f)$ be an RA. Then, we define: 
\begin{align*}
IP^a_{\lambda}(\mathcal{A},w) = \{ \pi \in IP_{\lambda}(\mathcal{A}, w) \mid & \mbox{ $\pi$ converges on $D_m$ at the $m$-th step for some 
$m\geq |w|$} \\
   & \mbox{ and $f \subseteq D_m$} \}.  
\end{align*}
The {\it language accepted by} $\mathcal{A}$ {\it in the $\lambda$-input mode}, denoted by 
$L_{\lambda}(\mathcal{A})$, is defined as follows:
\begin{align*}
L_{\lambda}(\mathcal{A}) = \{ w \in \Sigma^* \, | \, & \,  IP^a_{\lambda}(\mathcal{A}, w) \ne \emptyset \}. 
\end{align*}}
\end{de}

\begin{de}{\rm 
The class of languages accepted by RAs ($k$-RAs, $lin$-RAs,  $poly$-RAs and $exp$-RAs)} \it{in the $\lambda$-input mode} {\rm  is denoted by $\lambda$-$\mathcal{RA}$ 
(resp. $\lambda$-$k$-$\mathcal{RA}$, $\lambda$-$\mathcal{LRA}$, $\lambda$-$\mathcal{PRA}$ and $\lambda$-$\mathcal{ERA}$). 
}
\end{de}

In what follows, we focus on dealing with $\lambda$-$\mathcal{LRA}$ and continue investigating the clouser properties of the class of languages. 
As a result, it is shown that the class forms an AFL, i.e., an abstract family of languages. 

\begin{thm}
For any LRA $\mathcal{A}$, there exists an LRA $\mathcal{A'}$ such that $L(\mathcal{A}) = L_{\lambda}(\mathcal{A'})$.
\end{thm}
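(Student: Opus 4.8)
The plan is to prove the inclusion $\mathcal{LRA}\subseteq\lambda$-$\mathcal{LRA}$ by constructing, from a given LRA $\mathcal{A}=(S,\Sigma,A,D_0,f)$, an LRA $\mathcal{A}'$ whose $\lambda$-input-mode language coincides with the ordinary language of $\mathcal{A}$. By the normal form lemma (Lemma~\ref{lem-nf}) I would first assume $\mathcal{A}$ is in normal form, so that $f$ is produced only in a converging state; this lets me identify ``acceptance'' with ``$f$ has appeared at convergence'' and removes the danger of spurious early occurrences of the final symbol. The construction I would use follows the priming idea already exploited in Lemma~\ref{lem-nf}: introduce a primed copy $a'$ of every input symbol, add a reading reaction $(a,\emptyset,a')$ that converts each freshly fed symbol, and relabel each reaction $(R,I,P)\in A$ to $(h(R),h(I),h(P))$, where $h$ primes the symbols of $\Sigma$ and fixes everything else. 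The effect is to decouple the act of reading a symbol from the simulation steps that consume it, and the whole overlay costs only a constant number of extra objects, so the workspace of $\mathcal{A}'$ stays linear in $|w|$ and $\mathcal{A}'$ is again an LRA.

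The forward inclusion $L(\mathcal{A})\subseteq L_{\lambda}(\mathcal{A}')$ is the routine half: given an accepting interactive process of $\mathcal{A}$ on $w$, I read $w$ in $\mathcal{A}'$ using the $\lambda$-free decomposition $w=a_1\cdots a_{|w|}$ (inserting no $\lambda$-moves at all), and the primed simulation reactions reproduce the computation of $\mathcal{A}$ step for step, up to the fixed delay introduced by priming, ending in a converging state that carries the final symbol. The harder half is the reverse inclusion $L_{\lambda}(\mathcal{A}')\subseteq L(\mathcal{A})$: here I must show that \emph{no} choice of $\lambda$-padding in the input decomposition can let $\mathcal{A}'$ accept a string outside $L(\mathcal{A})$. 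The strategy is to take an arbitrary accepting $\lambda$-input-mode process of $\mathcal{A}'$ on $w$ and \emph{normalise} it into an ordinary interactive process on the same $w$, arguing that a $\lambda$-step, which feeds no new symbol, fires only reactions that would equally be available later, so its effect can be commuted past the remaining reads without changing the multiset eventually reached; this collapses the $\lambda$-padded process to a $\lambda$-free one that is then mirrored by an accepting computation of $\mathcal{A}$.

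The main obstacle is precisely this neutralisation of $\lambda$-moves under the \emph{maximally parallel} semantics. Because $Res_{A'}$ fires every enabled multiset of reactions at each step (via $En^{p}_{A'}$), I cannot simply treat a $\lambda$-step as a no-op: if any reaction is enabled it \emph{must} fire, so the construction has to guarantee that whenever a $\lambda$-move is inserted during the reading phase the configuration is either reaction-stable or advances the simulation in a way indistinguishable from a genuine step of $\mathcal{A}$. Getting the gating right---so that reading a real symbol is what drives each simulation step, while inserted $\lambda$-steps neither stall the machine prematurely nor let it run ahead and over-simulate---is the delicate point, and it is what the priming delay together with a blocking control symbol (present exactly when the final reaction producing $f'$ must be suppressed) is meant to enforce. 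Once this commutation and normalisation argument is verified, the two inclusions give $L_{\lambda}(\mathcal{A}')=L(\mathcal{A})$, and the linearity of the workspace of $\mathcal{A}'$ completes the proof.
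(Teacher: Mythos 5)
Your forward direction is fine, and your instinct that all the difficulty sits in the reverse inclusion is correct; but the way you propose to close it --- commuting $\lambda$-steps past the remaining reads and normalising an arbitrary accepting $\lambda$-mode process into an ordinary one --- is a genuine gap, and it cannot be repaired with the construction you describe. Your reading reaction $(a,\emptyset,a')$ is never inhibited, and nothing in your overlay records that a $\lambda$-move occurred before the input was exhausted; consequently a mid-input $\lambda$-step in $\mathcal{A}'$ is a faithful simulation of a $\lambda$-move of $\mathcal{A}$ itself (the relabelled reactions fire one extra round on the primed content while no new symbol arrives). Hence your $\mathcal{A}'$ satisfies $L_{\lambda}(\mathcal{A}')\supseteq L_{\lambda}(\mathcal{A})$, and proving $L_{\lambda}(\mathcal{A}')\subseteq L(\mathcal{A})$ would amount to proving $L_{\lambda}(\mathcal{A})\subseteq L(\mathcal{A})$, i.e.\ that $\mathcal{LRA}=\lambda$-$\mathcal{LRA}$ --- precisely the problem the paper lists as open in its conclusion. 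The commutation claim itself is also unsound under the maximally parallel semantics with inhibitors: a reaction enabled at a $\lambda$-step need not be enabled any longer after the next read (the incoming symbol, or what the delayed simulation later produces, may lie in its inhibitor set), and firing it earlier versus later yields different multisets, so there is no general exchange lemma to appeal to. Finally, the ``blocking control symbol present exactly when the production of $f'$ must be suppressed'' that you invoke is the normal-form device (the symbol $c$ of Lemma~\ref{lem-nf}); it prevents premature acceptance but does nothing to police $\lambda$-moves.

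The paper takes the opposite route: instead of arguing that premature $\lambda$-moves are harmless, its construction makes them fatal. The reading reaction is $(a,p_1,a')$, inhibited by a control symbol $p_1$, and the reaction ${\bf a}_2=(p_0,\Sigma,p_1)$ has inhibitor $\Sigma$, so it fires exactly at a step in which no real symbol is fed. Thus the first mid-input $\lambda$-move irreversibly replaces $p_0$ by $p_1$; every real symbol fed afterwards can never be primed, hence never consumed, and since the final reaction ${\bf a}_3=(f,\Sigma,f')$ is inhibited by unprimed input symbols, such a process can never accept. Therefore every accepting $\lambda$-mode process of $\mathcal{A}'$ has all of its $\lambda$s after the genuine input and is literally (the image of) an ordinary interactive process of $\mathcal{A}$, so no normalisation argument is needed at all. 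If you replace your uninhibited reading rule by this read-gating mechanism (keeping your priming delay and guarding the $f\to f'$ conversion by the inhibitor $\Sigma$), your proof goes through; without it, the reverse inclusion remains unproved.
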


\begin{proof}
Let $\Sigma' = \{ a' \, | \, a \in \Sigma \}$ be a new alphabet.
For an LRA $\mathcal{A} = (S, \Sigma, A, D_0, f)$ in normal form, construct an RA $\mathcal{A'} = (S', \Sigma, A', D'_0, f')$ and a mapping $h: S'^{\#} \rightarrow S'^{\#}$ as follows:
\begin{align*}
&S' = S \cup \Sigma' \cup \{ p_0, p_1, d, f' \},  \\
&A' = \{ h(R), h(I) \cup \{ f' \}, h(P) \, | \, ( R, I, P ) \in A \} \cup \{ (a, p_1, a') \, | \, a \in \Sigma \} \\
&\qquad \cup \{ {\bf a}_1, {\bf a}_2, {\bf a}_3 \},\mbox{ where} \\
&\qquad \qquad {\bf a}_1 = ( d, \Sigma, h(D_0) ),\ \,{\bf a}_2 = ( p_0, \Sigma, p_1 ),\ \, {\bf a}_3 = ( f, \Sigma, f' ),  \\
&D'_0 = d p_0,
\end{align*}
and
\begin{eqnarray*}
\left\{ \begin{array}{ll}
h(a) = a' \qquad (\text{for }a \in \Sigma),  \\
h(a) = a \qquad \, (\text{for }a \in S' - \Sigma).
\end{array} \right.
\end{eqnarray*}
Note that once $\lambda$ is inputted before an element $a \in \Sigma$ in an interactive process, $a$ cannot be consumed since $p_1$ will have to be introduced by ${\bf{a}}_2$ in the next step, which implies that no $\lambda$-input is  allowed before an element $a \in \Sigma$ in a successful interactive process in $\mathcal{A}'$.
\end{proof}

\begin{de} \rm{
An $s(n)$-bounded RA $\mathcal{A} = (S, \Sigma, A, D_0, f)$ is said to be in} \it{$\lambda$-normal form} \rm{if $f$ appears  only in a converging state of an interactive process in the $\lambda$-input mode.}
\end{de}

\begin{lem}
For an $s(n)$-bounded RA $\mathcal{A} = (S, \Sigma, A, D_0, f)$, there exists an $s(n)$-bounded RA $\mathcal{A'} = (S', \Sigma, A', D'_0, f')$ such that $L_{\lambda}(\mathcal{A}) = L_{\lambda}(\mathcal{A'})$ and $f'$ appears  only in a converging state of an interactive process.
\end{lem}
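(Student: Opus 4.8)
The plan is to adapt the construction of Lemma~\ref{lem-nf}, re-examining every step under the $\lambda$-input semantics. I would introduce a primed copy $\Sigma'=\{a'\mid a\in\Sigma\}$ with the morphism $h$ given by $h(a)=a'$ on $\Sigma$ and the identity elsewhere, replace every reaction $(R,I,P)\in A$ by $(h(R),\,h(I)\cup\{f'\},\,h(P)+c)$ so that a fresh blocking symbol $c$ is emitted at each simulated step while the new final symbol $f'$ inhibits every simulated reaction, and delay start-up through a rule $(d,\emptyset,h(D_0))$ with $D'_0=d$. A cleanup rule $(c,\emptyset,\lambda)$ drains the blocking symbols, the input-feeding rules are $(a,\emptyset,a')$, and the candidate final rule $(f,c,f')$ can fire only when $f$ is present and no $c$ survives, that is, only once the simulated copy of $\mathcal A$ has quiesced.

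I would then set up the step-by-step correspondence between accepting interactive processes in the $\lambda$-input mode. Each genuine reaction step of $\mathcal A$, whether driven by an input letter or by a $\lambda$-move, is mirrored by the $h$-rules firing on the primed configuration $h(D_i)$; a $\lambda$-move simply becomes an $h$-rule firing with no fresh primed letter present, and the one-step conversion lag is absorbed by inserting extra $\lambda$-inputs into the decomposition chosen for $\mathcal A'$. Exactly as in Lemma~\ref{lem-nf}, while any $h$-rule remains active a copy of $c$ is produced, so by the maximally parallel semantics at least one $c$ persists and $(f,c,f')$ stays blocked; hence $f'$ can be emitted only from a configuration on which the simulated $\mathcal A$ satisfies $En^p_A=\emptyset$, i.e.\ a converging state.

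The genuinely new difficulty is peculiar to the $\lambda$-input mode. Unlike the standard mode of Lemma~\ref{lem-nf}, the simulation may legitimately quiesce after only a \emph{prefix} $w'$ of the input whose $\mathcal A$-image is already accepted, so the naive mirror would fire $(f,c,f')$ there; the remaining letters of $w$ would then be read, turned into inert primed symbols, and the process would converge with $f'$ still present, spuriously accepting every word that possesses an accepted prefix. To repair this I would inhibit the feeding rules by $f'$, writing $(a,f',a')$, and add for each $a\in\Sigma$ a retraction rule $(f'a,\emptyset,f a')$. Its effect is that the first input letter arriving after a premature $f'$ deletes $f'$, reinstates $f$, and re-injects the letter as $a'$ into the simulation, which resumes from precisely the configuration $h(D_{m'})$; consequently $f'$ can survive to a converging state only when \emph{no} further input follows, which is exactly the case in which $\mathcal A$ converges on the whole of $w$ with $f$. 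This yields $L_\lambda(\mathcal A')=L_\lambda(\mathcal A)$ in both directions, and since $f'$ occurs only at configurations with $En^p_A=\emptyset$, the required $\lambda$-normal form holds.

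It remains to confirm the volume bound. Because $(c,\emptyset,\lambda)$ applied maximally erases all surviving $c$'s in a single step, the number of $c$'s present after any step equals the number of $h$-applications in that step, which is at most $|D_i|\le s(n)$; the retraction rule preserves size, exchanging $f'a$ for $f a'$, and so causes no accumulation. Hence $|D'_i|\le 2\,s(n)+O(1)$, placing $\mathcal A'$ in the same boundedness class as $\mathcal A$ and completing the argument.
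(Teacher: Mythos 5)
Your construction diverges from the paper's at the decisive point, and the divergence creates two concrete gaps. The first breaks the language equality itself: there is a race at the step in which an input letter arrives. Take $u\in L_\lambda(\mathcal{A})$ with $ua\notin L_\lambda(\mathcal{A})$, and the decomposition $u\lambda^{r}a$ in which, by step $j$, the embedded simulation of $\mathcal{A}$ on $u$ has quiesced with $f$ present and all blocking symbols $c$ drained, and the letter $a$ arrives at step $j+1$. On the multiset $a+D_j$ both your final rule $(f,c,f')$ and your feeding rule $(a,f',a')$ are enabled, since inhibitors are tested on the current multiset, which contains neither $c$ nor $f'$; by maximal parallelism both must fire, yielding a state that contains $f'$ and $a'$ but no unprimed letter. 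Your retraction rule $(f'a,\emptyset,fa')$ needs an unprimed letter and so never becomes enabled; nothing else is enabled either, so the process converges with $f'$ present and $ua$ is accepted spuriously. This is exactly why the paper's final rule is ${\bf a}_3=(f,\{c,p_0\}\cup\Sigma,f')$, with all of $\Sigma$ in the inhibitor: $f'$ can never be created in a step in which a fresh, still-unprimed letter is present.

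The second gap is that retraction, even where it works as you intend, contradicts the very property the lemma asserts. In a run on $uv$ with decomposition $u\lambda^{r}v$, your automaton produces $f'$ after the prefix $u$ and only the arrival of the next letter removes it; thus $f'$ occurs in a state $D_j$ from which the process demonstrably goes on ($D_{j+1}\neq D_j$), i.e.\ in a non-converging state. Your closing claim---``$f'$ occurs only at configurations with $En^p=\emptyset$, hence the $\lambda$-normal form holds''---conflates local quiescence with convergence: a converging state must additionally lie at index $\geq |w|$ and never change again, which is precisely what fails in the situation your retraction rule is built to handle. This is not cosmetic: the subsequent closure constructions (union, intersection, Kleene $*$, \dots) detect termination with rules such as $(f_1f_2,\emptyset,f)$ and are sound only if final symbols never appear transiently mid-process. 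The paper's proof takes the opposite route and never retracts $f'$: it adds a mode flag, with ${\bf a}_5=(p_0,\emptyset,p_0)$ keeping the ``input still running'' token alive, ${\bf a}_1=(p_0,\Sigma,p_1)$ implementing a nondeterministic guess that the input has ended, feeding rules $(a,p_1,a')$ disabled once the guess is made, and $f'$ producible only when $p_0$ is absent, no $c$ survives, and no letter of $\Sigma$ is present---so a wrong guess yields a computation in which $f'$ is simply never created, rather than created and withdrawn. To repair your proof you would have to replace retraction by such a guess-and-block mechanism (or otherwise guarantee that $f'$ is never produced while further input can still arrive), not merely undo a premature $f'$.
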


\begin{proof}
For an $s(n)$-bounded RA $\mathcal{A} = (S, \Sigma, A, D_0, f)$, construct an RA $\mathcal{A'} = (S', \Sigma, A', D'_0, f')$ and a mapping $h:S'^{\#} \rightarrow S'^{\#}$ as follows:
\begin{align*}
&S' = S \cup \Sigma' \cup \{ p_0, p_1, c, d, f' \}, \text{ where } \Sigma' = \{ a' | a \in \Sigma \}, \\
&A' = \{ ( h(R), h(I) \cup f', h(P) + c ) \, | \, ( R, I, P ) \in A \} \\
&\qquad \cup \{ ( a, p_1, a' ) \, | \, a \in \Sigma \} \cup \{ {\bf a}_1, {\bf a}_2, {\bf a}_3, {\bf a}_4, {\bf a}_5  \}, \ \mbox{where} \\
&\quad  {\bf a}_1 = ( p_0 , \Sigma, p_1 ),\ \,{\bf a}_2 = ( c, \emptyset, \lambda ),\ \, {\bf a}_3 = ( f, \{ c,p_0 \} \cup \Sigma, f' ), \\
&\quad {\bf a}_4 = ( d, \emptyset, h(D_0) ), \ \,{\bf a}_5 = ( p_0, \emptyset, p_0 ),  \\
&D'_0 = d p_0,
\end{align*}
and
\begin{eqnarray*}
\left\{ \begin{array}{ll}
h(a) = a' \qquad (\text{for }a \in \Sigma),  \\
h(a) = a \qquad \, (\text{for }a \in S' - \Sigma).
\end{array} \right.
\end{eqnarray*}
When $\lambda$ is inputted in an interactive process, ${\bf a}_1$ exclusively or ${\bf a}_5$ has to be used in the next step. Using ${\bf a}_1$ implies that the input of the string terminates, while using ${\bf a}_5$ implies that the input of the string continues.
The rest of the key issue is proved in a similar manner to Lemma \ref{lem-nf}.
\end{proof}

\begin{thm}
$\lambda$-$\mathcal{LRA}$ is closed under union, intersection, concatenation, Kleene $+$, Kleene $*$, derivative, $\lambda$-free morphisms, inverse morphisms, $\lambda$-free gsm-mappings and shuffle.
\end{thm}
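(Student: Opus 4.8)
The plan is to split the ten operations into two groups. The seven operations already possessed by $\mathcal{LRA}$ (union, intersection, concatenation, derivative, $\lambda$-free morphisms, $\lambda$-free gsm-mappings and shuffle) can be handled by reusing the product-and-marker constructions from the corresponding closure theorem of the previous section, now starting from LRAs put in $\lambda$-normal form by the lemma just proved and read as devices operating in the $\lambda$-input mode. Since a $\lambda$-input step is nothing but an ordinary $Res_A$-step that happens to consume no input symbol, each simulation argument transfers almost verbatim; the only new point to check is that the interspersed $\lambda$-steps never produce $f'$ prematurely, and this is exactly what the $\lambda$-normal form guarantees, as $f'$ can surface only in a converging configuration. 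The workspace accounting is unchanged, so linear-boundedness is inherited in every case.

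The real content lies in the three genuinely new operations, for which the $\lambda$-input mode is indispensable. For inverse morphism under a fixed $\phi:\Sigma^*\to\Delta^*$, I would construct $\mathcal{A}'$ that reads one symbol $a\in\Sigma$, records it, and then over the next $|\phi(a)|$ $\lambda$-steps feeds the fixed string $\phi(a)=b_1\cdots b_k$ one symbol at a time into an embedded copy of $\mathcal{A}$, advancing through $\phi(a)$ by means of a small finite counter $q_0,\dots,q_k$ before the next input symbol is read (the case $\phi(a)=\lambda$ simply reads $a$ and feeds nothing). Because $\phi$ is fixed, $\max_a|\phi(a)|$ is a constant, so the embedded $\mathcal{A}$ runs on $\phi(w)$, whose length is $O(|w|)$, and $\mathcal{A}'$ remains linear-bounded.

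For Kleene $+$ the crucial device is a clean \emph{reset} of the working multiset effected entirely by $\lambda$-moves. Writing $\Gamma$ for the finite set of control symbols present during normal operation, I would include an erasure reaction $(s,\Gamma,\lambda)$ for every background symbol $s$ of $\mathcal{A}$, inhibited by $\Gamma$ so that it lies dormant while the device runs. Upon the appearance of $f$, two competing reactions both consume the single $f$ — one that maps $f$ to $f'$ and leads to convergence (accepting the input read so far, which lies in $L^+$), and one that consumes $f$ together with $\Gamma$ and produces a reset marker $r$ — so that in maximal-parallel mode $En^p_{A'}$ offers a genuine nondeterministic choice. Once $\Gamma$ is withdrawn the erasure reactions fire in a single maximal step, deleting every remaining copy of every symbol at once, after which a reaction $(r,S,D_0+\Gamma)$, gated by inhibition on all of $S$, reintroduces $D_0$ and $\Gamma$ and processing of the next factor begins. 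Kleene $*$ is then $L^+\cup\{\lambda\}$, obtained by adjoining one reaction that lets $D'_0$ converge onto $f'$ on empty input. During each factor $w_i$ the workspace is bounded by the $\mathcal{A}$-workspace for $w_i$ plus a constant, hence by $O(|w_i|)\le O(|w|)$, so the construction stays linear-bounded.

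The step I expect to be the main obstacle is the correctness of this reset. One must verify that the erasure reactions can never fire during normal operation (ensured by inhibiting each of them with the full control set $\Gamma$), that a single maximal-parallel step really clears a multiset of a priori unknown composition (which holds precisely because maximal parallelism applies $(s,\Gamma,\lambda)$ once for each available copy of $s$), and — most delicately — that no partially completed or interleaved reset can fabricate an accepting computation for a string outside $L^+$. Making the reset atomic by gating its withdrawal, erasure and reintroduction stages with distinct transient markers, and checking that $f'$ is reachable only after a genuine accepting configuration of $\mathcal{A}$ followed either by convergence or by a fully completed reset, is where the argument will require the most care.
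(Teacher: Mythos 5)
Your division into ``seven operations that transfer verbatim'' and ``three genuinely new ones'' breaks down exactly at intersection, and this is the central gap. The LRA intersection construction is a lockstep product: each input symbol $a$ is converted into $a^{(1)}a^{(2)}$, both embedded copies advance at every step, and acceptance requires $f_1f_2$. In the $\lambda$-input mode this fails, because a string $w$ may lie in $L_{\lambda}(\mathcal{A}_1)\cap L_{\lambda}(\mathcal{A}_2)$ only via accepting processes that use \emph{different} numbers and placements of $\lambda$-moves, and one cannot insert extra $\lambda$-steps into a given accepting process at will: every step applies $Res_A$, so padding alters the computation. Concretely, one can build $\mathcal{A}_1$ that accepts $ab$ only with no $\lambda$-step between $a$ and $b$, and $\mathcal{A}_2$ that accepts $ab$ only with exactly one such $\lambda$-step (inhibitors killing the run otherwise); then the lockstep product rejects $ab$ although it lies in the intersection. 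This is precisely why the paper gives a new construction for intersection: each input symbol is buffered as a pair of primed symbols $a^{(1)'}a^{(2)'}$ that persist via idle reactions $(a^{(i)'},\Sigma,a^{(i)'})$ until the corresponding copy is ready to consume its half, so the two copies proceed asynchronously, and the final reaction $(f_1f_2,\Sigma\cup\Sigma'_1\cup\Sigma'_2,f)$ also checks that no buffered symbol is pending. Your proposal contains no mechanism that lets one copy wait for the other, so closure under intersection is not proved.

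A milder form of the same asynchrony problem undermines your verbatim transfer of the $\lambda$-free gsm-mapping construction (and hence of $\lambda$-free morphisms, which you derive from it): in that construction the embedded copy of $\mathcal{A}$ is inhibited by the control symbol $c$ except for one step per simulated input symbol, so it can only execute ordinary-mode runs; transferred verbatim it yields $g(L(\mathcal{A}))$ rather than $g(L_{\lambda}(\mathcal{A}))$, and since it is open whether $\lambda$-$\mathcal{LRA}=\mathcal{LRA}$ this cannot be waved away. The paper instead proves closure under codings directly (that construction imposes no freezing, so $\lambda$-paddings carry over) and then invokes Theorem 3.7.1 of Ginsburg (closure under inverse morphisms, intersection with regular sets and codings gives $\lambda$-free morphisms) and trio theory for $\lambda$-free gsm-mappings; the derivatives likewise need added idle reactions $(q_i,\Sigma,q_i)$ so that the counter can wait while the embedded copy performs $\lambda$-steps. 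By contrast, your treatments of inverse morphisms and of Kleene $+$/$*$ --- feeding $h(a)$ symbol by symbol through $\lambda$-steps with a finite counter, and an inhibitor-gated erasure phase clearing the whole multiset in one maximally parallel step before reinjecting $D_0$ --- are in substance the paper's own constructions (the paper does $*$ directly and obtains $+$ as $L^*\cap\Sigma^+$), though your $\lambda$-adjunction for Kleene $*$ must additionally prevent a prematurely produced $f'$ from surviving convergence on a nonempty input.
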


\begin{proof}

[union, concatenation and shuffle] Using the same construction as the proof of Theorem 1, the claims are immediately proved. 

[intersection] Let $\mathcal{A}_1 = (S_1, \Sigma, A_1, D^{(1)}_{0}, f_1)$ and $\mathcal{A}_2 = (S_2, \Sigma, A_2, D^{(2)}_{0}, f_2)$ be LRAs in $\lambda$-normal form with $(S_1 - \Sigma) \cap (S_2 - \Sigma) = \emptyset$. Moreover, let $\Sigma_i = \{ a^{(i)} \, | \, a \in \Sigma \}$, $\Sigma'_i = \{ a^{(i)'} \, | \, a^{(i)} \in \Sigma_i \}$ be alphabets and $h_i: {S_i}^{\#} \rightarrow {S_i}^{\#}$ be a mapping defined as follows:
\begin{eqnarray*}
\left\{ \begin{array}{ll}
h_i(a) = a^{(i)} \qquad \, (\text{for }a \in \Sigma),  \\
h_i(a) = a \qquad \quad (\text{for }a \in S_i - \Sigma),
\end{array} \right.
\end{eqnarray*}
for $i \in \{ 1,2 \}$. 

Then, we construct an RA $\mathcal{A} = (S, \Sigma, A, D_0, f)$ as follows:
\begin{align*}
&S = S_1 \cup S_2 \cup \Sigma_1 \cup \Sigma_2 \cup \Sigma'_1 \cup \Sigma'_2 \cup \{ d,f \}, \\
&A = \{ (h_i(R), h_i(I) \cup \Sigma \cup \Sigma'_j \cup \{ f \}, h_i(P)) \, | \, ( R, I, P ) \in A_i, i \in \{ 1,2 \} \} \\
&\qquad \cup \{ (a, \Sigma'_1 \cup \Sigma'_2, a^{(1)'} a^{(2)'}) \, | \, a \in \Sigma \} \\
&\qquad \cup \{ (a^{(i)'}, \Sigma, a^{(i)}) \, | \, a^{(i)} \in \Sigma_i, i \in \{ 1,2 \} \} \\
&\qquad \cup \{ (a^{(i)'}, \Sigma, a^{(i)'}) \, | \, a^{(i)} \in \Sigma_i, i \in \{ 1,2 \} \} \\
&\qquad  \cup \{ (d, \emptyset, h_1(D^{(1)}_{0}) + h_2(D^{(1)}_{0}) ) \} \cup \{ (f_1 f_2, \Sigma \cup \Sigma'_1 \cup \Sigma'_2, f) \},  \\
&D_0 = d.
\end{align*}
Let $w  = a_1 \cdots a_{n} \in L_{\lambda}(\mathcal{A}_1) \cap L_{\lambda}(\mathcal{A}_2)$. 
Moreover, let $D^{(1)}_{i} \rightarrow^{\lambda} \cdots \rightarrow^{\lambda} D^{(1)}_{j} \rightarrow^{a_{m}} D^{(1)}_{j+1}$ be a part of $\pi_1 \in IP_{\lambda}(\mathcal{A}_1, w)$ and $D^{(2)}_{k}  \rightarrow^{\lambda} \cdots \rightarrow^{\lambda} D^{(2)}_{l} \rightarrow^{a_{m}} D^{(2)}_{l+1}$ be a part of $\pi_2 \in IP_{\lambda}(\mathcal{A}_2, w)$ for $1 \le m \le n$. We assume that $j-i \le l-k$. Then, they are imitated in $\pi \in IP_{\lambda}(\mathcal{A}, w)$ as follows:
\begin{align*}
&h_1(D^{(1)}_{i})+h_2(D^{(2)}_{k}) \rightarrow^{\lambda} \cdots \rightarrow^{\lambda} h_1(D^{(1)}_{j})+h_2(D^{(2)}_{k-i+j}) \\
\rightarrow^{a_{m}} &h_1(D^{(1)}_{j})+h_2(D^{(2)}_{k-i+j}) + a^{(1)'}_{m} a^{(2)}_{m} \\  
\rightarrow^{\lambda} \ \, &h_1(D^{(1)}_{j})+h_2(D^{(2)}_{k-i+j}) + a^{(1)}_{m} a^{(2)'}_{m} \\
\rightarrow^{\lambda} \ \, &h_1(D^{(1)}_{j})+h_2(D^{(2)}_{k-i+j+1}) + a^{(1)}_{m} a^{(2)'}_{m} \rightarrow^{\lambda} \cdots \\
\rightarrow^{\lambda} \ \, &h_1(D^{(1)}_{j})+h_2(D^{(2)}_{k}) + a^{(1)}_{m} a^{(2)}_{m} \\
\rightarrow^{\lambda} \ \, &h_1(D^{(1)}_{j+1})+h_2(D^{(2)}_{k+1}).
\end{align*}

The other direction of the proof is shown in the similar manner.
Hence, it holds that $L_{\lambda}(\mathcal{A}) = L_{\lambda}(\mathcal{A}_1) \cap L_{\lambda}(\mathcal{A}_2)$ and the workspace of $\mathcal{A}$ is linear-bounded.

[Kleene $*$] Let $\mathcal{A} = (S, \Sigma, A, D_0, f)$ be an LRA in $\lambda$-normal form and $\Sigma' = \{ a' \, | \, a \in \Sigma \}$. Construct an RA $\mathcal{A'} = (S', \Sigma, A', D'_0, f')$ and a mapping $h :S'^{\#} \rightarrow S'^{\#}$ as follows:
\begin{align*}
&S' = S \cup \Sigma' \cup \{p_0, p_1, d, e, f'', f' \}, \\
&A' = \{ ( h(R), h(I) \cup \{ f', f'' \}, h(P) ) \, | \, ( R, I, P ) \in A \} \\
&\qquad \cup \{ ( h(R), h(I) \cup \Sigma \cup \{ f', f'' \}, h(P) ) \, | \, ( R, I, P ) \in A, \, f \subseteq P \} \\
&\qquad \cup \{ ( a, p_1, a'  ) \, | \, a \in \Sigma \}  \\
&\qquad \cup \{ ( a, e,  \lambda ) \, | \, a \in (S \cup \Sigma') - \Sigma  \} \\
&\qquad \cup \{ ( d, \emptyset, h(D_0) + e ) \} \cup \{ ( p_0, \Sigma,  p_1) \} \cup \{ ( p_0, \emptyset,  p_0) \}  \\
&\qquad \cup \{ ( ef, \Sigma, f'' ) \} \cup \{ ( f'', \{ p_1 \},  h(D_0) + e) \} \cup \{ ( f'', \Sigma \cup \{ p_0 \},  f') \},  \\
&D'_0 = d p_0,
\end{align*}
and
\begin{eqnarray*}
\left\{ \begin{array}{ll}
h(a) = a' \qquad (\text{for }a \in \Sigma),  \\
h(a) = a \qquad \, (\text{for }a \in S' - \Sigma).
\end{array} \right.
\end{eqnarray*}
Let $w_1, w_2 \in \Sigma^*$ with $w_1 = a^{(1)}_{1} \cdots a^{(1)}_{n}, w_2 = a^{(2)}_{1} \cdots a^{(2)}_{m}$.
Then, we can easily see that there exist the interactive processes $D_0, D^{(1)}_{1}, \ldots D^{(1)}_{i} \in IP^a_{\lambda}(\mathcal{A}, w_1)$ and $D_0, D^{(2)}_{1} \ldots D^{(2)}_{j} \in IP^a_{\lambda}(\mathcal{A}, w_2)$ which converge on $D^{(1)}_i$ and $D^{(2)}_j$, respectively, if and only if there exists the interactive process $D'_0, D'_1, \ldots D'_{i+j+4} \in IP^a_{\lambda}(\mathcal{A}', w_1w_2)$ such that
\begin{eqnarray*}
\left\{ \begin{array}{ll|}
D'_{k+1} = h(D^{(1)}_{k}) + ep_0  \qquad \qquad \quad  (\text{for } 0 \le k \le i),  \\
D'_{i+2} = h(D^{(1)}_{i}) - f + f''p_0 ,  \\
D'_{i+3} = h(D_{0}) + ep_0,  \\
D'_{k+i+4} = h(D^{(2)}_{k}) + ep_0  \qquad \qquad \,  (\text{for } 0 \le k \le j),  \\
D'_{i+j+5} = h(D^{(2)}_{j}) - f + f''p_1 ,  \\
D'_{i+j+6} = fp_1  \\
\end{array} \right.
\end{eqnarray*}
Hence, it holds that $w_1w_2 \in L_{\lambda}(\mathcal{A}')$.
In a similar manner, we can prove that $w_1 \cdots w_l \in L_{\lambda}(\mathcal{A}')$ for $w_1, \ldots, w_l \in L_{\lambda}(\mathcal{A})$ and $l \ge 0$.
Then, it holds that $L_{\lambda}(\mathcal{A})^*  = L_{\lambda}(\mathcal{A'})$ and the workspace of $\mathcal{A'}$ is linear-bounded.

[Kleene $+$] For LRAs $\mathcal{A}$ and $\mathcal{A'}$ in the proof of ``Kleene $*$'' part, it holds that $L_{\lambda}(\mathcal{A})^+  = L_{\lambda}(\mathcal{A'}) \cap \Sigma^+$.  Since $\lambda$-$\mathcal{LRA}$ is closed under intersection with regular languages, it is also closed under  Kleene $+$.

[right derivative]  For an LRA $\mathcal{A} = (S, \Sigma, A, D_0, f)$ in $\lambda$-normal form and $x = a_1 \cdots a_n \in \Sigma^+$, construct an RA $\mathcal{A'} = (S', \Sigma, A', D'_0, f')$ and a mapping $h:S'^{\#} \rightarrow S'^{\#}$ as follows:
\begin{align*}
&S' = S \cup \Sigma' \cup Q \cup \{ f' \}, \text{ where } \Sigma' = \{ a' | a \in \Sigma \}, Q = \{ q_i \, | \, 0 \le i \le n \}, \\
&A' = \{ ( h(R), h(I) \cup \{ f' \}, h(P) ) \, | \, ( R, I, P ) \in A \} \\
&\qquad \cup \{ ( a, Q - \{ q_0 \}, a' ) \, | \, a \in \Sigma \}  \\
&\qquad \cup \{ ( q_i, \Sigma, a'_{i+1} q_{i+1} ) \, | \, 0 \le i \le n-1 \}  \\
&\qquad \cup \{ ( q_i, \Sigma, q_i ) \, | \, 0 \le i \le n \}  \\
&\qquad \cup \{ ( fq_n, \Sigma , f' ) \}, \\
&D'_0 = h(D_0) + q_0,
\end{align*}
and
\begin{eqnarray*}
\left\{ \begin{array}{ll}
h(a) = a' \qquad (\text{for }a \in \Sigma),  \\
h(a) = a \qquad \, (\text{for }a \in S' - \Sigma).
\end{array} \right.
\end{eqnarray*}
Note that because of the inhibitor of a reaction in $\{ ( a, Q - \{ q_0 \}, a' ) \, | \, a \in \Sigma \}$, a reaction in $\{ ( q_i, \Sigma, a'_{i+1} q_{i+1} ) \, | \, 0 \le i \le n-1 \}$ must be used after feeding the input. 
Hence, the rest of the proof is similar to the case for the ordinary input mode.

[left derivative] For an LRA $\mathcal{A} = (S, \Sigma, A, D_0, f)$ in $\lambda$-normal form and $x = a_1 \cdots a_n \in \Sigma^+$, construct an RA $\mathcal{A'} = (S', \Sigma, A', D'_0, f')$ and a mapping $h:S'^{\#} \rightarrow S'^{\#}$ as follows:
\begin{align*}
&S' = S \cup \Sigma' \cup Q \cup \{ f' \}, \text{ where } \Sigma' = \{ a' | a \in \Sigma \}, Q = \{ q_i \, | \, 0 \le i \le n \}, \\
&A' = \{ ( h(R), h(I) \cup \{ f' \}, h(P) ) \, | \, ( R, I, P ) \in A \} \\
&\qquad \cup \{ ( a, Q - \{ q_n \}, a' ) \, | \, a \in \Sigma \}  \\
&\qquad \cup \{ ( q_i, \Sigma, a'_{i+1} q_{i+1} ) \, | \, 0 \le i \le n-1 \}  \\
&\qquad \cup \{ ( q_i, \Sigma, q_i ) \, | \, 0 \le i \le n \}  \\
&\qquad \cup \{ ( fq_n, \Sigma , f' ) \}, \\
&D'_0 = h(D_0) + q_0,
\end{align*}
and
\begin{eqnarray*}
\left\{ \begin{array}{ll}
h(a) = a' \qquad (\text{for }a \in \Sigma),  \\
h(a) = a \qquad \, (\text{for }a \in S' - \Sigma).
\end{array} \right.
\end{eqnarray*}
Note that because of the inhibitor of a reaction in $\{ ( a, Q - \{ q_n \}, a' ) \, | \, a \in \Sigma \}$, each reaction in $\{ ( q_i, \Sigma, a'_{i+1} q_{i+1} ) \, | \, 0 \le i \le n-1 \}$ must be used before starting the input except $\lambda$.

Let $xw \in \Sigma^*$ with $w = b_1 \cdots b_l$. Then, there exists an interactive process $\pi = D_{0}, \ldots, D_{m} \in IP^a_{\lambda}(\mathcal{A}, xw)$ which converges on $D_{m}$ if and only if there exists $\pi' = D'_0, \ldots, D'_{m + 2} \in IP^a_{\lambda}(\mathcal{A}', w)$ such that
\begin{eqnarray*}
\left\{ \begin{array}{ll||}
D'_{i+1} = h(D_{i}) + q_{k+1} a'_{k+1} \qquad \qquad (\text{for } 0 \le k \le n-1),  \\
D'_{i+1} = h(D_{i}) + q_{n} b'_{k-n+1} \qquad \qquad (\text{for } n \le k \le l+n-1 ), \\
D'_{i+1} = h(D_{i}) + q_n, 
\end{array} \right.
\end{eqnarray*}
for some $0 \le i \le m$.
Hence, it holds that $x \backslash L_{\lambda}(\mathcal{A}) = L_{\lambda}(\mathcal{A'})$ and the workspace of $\mathcal{A'}$ is linear-bounded.

[inverse morphisms] Let $\mathcal{A} = (S, \Delta, A, D_0, f)$ be an LRA in normal form and $h:\Sigma^* \rightarrow \Delta^*$ be a morphism defined as $h(a) = b_{(a,1)} \cdots b_{(a, l)} \in \Delta^*$ or $h(a) = \lambda$, for $a \in \Sigma$ and $|h(a)|=l$. Moreover, let $\Delta' = \{ b' \, | \, b \in \Delta \}$ and $Q = \{ q_{(a, i)} \, | \, a \in \Sigma, \,  |h(a)| \ge 2, \, 1 \le i \le |h(a)| - 1 \}$.  Construct an RA $\mathcal{A'} = (S', \Sigma, A', D'_0, f')$ and a mapping $g :S'^{\#} \rightarrow S'^{\#}$ as follows:
\begin{align*}
&S' = S \cup \Sigma \cup \Delta' \cup Q \cup \{ d, f' \}, \\
&A' = \{ ( g(R), g(I) \cup \{ a \in \Sigma \, | \, |h(a)| = 0 \} \cup \{ f' \}, g(P) ) \, | \, ( R, I, P ) \in A \} \\
&\qquad \cup \{ ( a, \emptyset, \lambda  ) \, | \, |h(a)| = 0, a \in \Sigma \}  \\
&\qquad \cup \{ ( a, \emptyset, b'_{(a,1)} ) \, | \, |h(a)| = 1, a \in \Sigma \} \\
&\qquad \cup \{ ( a, \emptyset, q_{(a, 1)} b'_{(a,1)} ), ( q_{(a, 1)}, \Sigma, b'_{(a,2)} ) \, | \, |h(a)| = 2, a \in \Sigma \} \\
&\qquad \cup \{ ( a, \emptyset, q_{(a, 1)} b'_{(a,1)} ), ( q_{(a, i)}, \Sigma, q_{(a, i+1)} b'_{(a, i+1)} ), ( q_{(a, |h(a)| - 1)}, \Sigma, b'_{(a, |h(a)|)} ) \, \\
&\qquad \qquad \qquad \qquad \qquad \qquad \qquad \quad  | \, |h(a)| \ge 3, \, 1 \le i \le |h(a)| - 2, a \in \Sigma \} \\
&\qquad \cup \{ ( q, \emptyset, q ) \, | \, q \in Q \} \\
&\qquad \cup \{ ( d, \emptyset, g(D_0) ) \} \cup \{ ( f, Q \cup \Sigma, f' ) \}, \\
&D'_0 = d,
\end{align*}
and
\begin{eqnarray*}
\left\{ \begin{array}{ll}
g(a) = a' \qquad (\text{for }a \in \Delta),  \\
g(a) = a \qquad \, (\text{for }a \in S' - \Delta).
\end{array} \right.
\end{eqnarray*}
Let $w  = b_{(a_1, 1)} \cdots b_{(a_1, |h(a_1)|)} \cdots b_{(a_n, 1)} \cdots b_{(a_n, |h(a_n)|)} \in L_{\lambda}(\mathcal{A})$. Hence, $a_1 \cdots a_n$ is included in $h^{-1}(w)$.
Moreover, let $D_{i} \rightarrow^{b_{(a_m, 1)}} \cdots \rightarrow^{b_{(a_m, |h(a_m)|)}} D_{j}$ be a part of $\pi \in IP_{\lambda}(\mathcal{A}, w)$. For $|h(a_m)| \ge 3$, it is imitated in $\pi' \in IP_{\lambda}(\mathcal{A}', w)$ as follows:
\begin{align*}
&g(D_{i-1}) \\
\rightarrow^{a_m} &g(D_i) b'_{(a_m, 1)} q_{(a_m, 1)} \\  
\rightarrow^{\lambda} \ \, &g(D_{i+1}) b'_{(a_m, 2)} q_{(a_m, 2)} \rightarrow^{\lambda} \cdots \\
&(\text{or } \rightarrow^{\lambda} \ \, g(D_{i+1}) q_{(a_m, 1)} \rightarrow^{\lambda} \cdots, \text{ for a $\lambda$-input in } \mathcal{A} ) \\
\rightarrow^{\lambda} \ \, &g(D_{i-1}) b'_{(a_m, h(a_m))} \\
\rightarrow^{\lambda} \ \, &g(D_{i}).
\end{align*}

The other direction of the proof is shown in the similar manner. Hence, it holds that $h^{-1}( L_{\lambda}(\mathcal{A}))  = L_{\lambda}(\mathcal{A'})$ and the workspace of $\mathcal{A'}$ is linear-bounded.

[$\lambda$-free morprhisms]
We first show that $\lambda$-$\mathcal{LRA}$ is closed under codings.
For an LRA $\mathcal{A} = (S, \Sigma, A, D_0, f)$ in $\lambda$-normal form and a coding $h:\Sigma^* \rightarrow \Delta^*$, construct an RA $\mathcal{A'} = (S', \Delta, A', D'_0, f')$ and a mapping $h: S'^{\#} \rightarrow S'^{\#}$  as follows:
\begin{align*}
&S' = S \cup \Sigma' \cup \Delta \cup \{ d \}, \text{ where } \Sigma' = \{ a' \, | \, a \in \Sigma \}, \\
&A' = \{ ( h(R), h(I), h(P) ) \, | \, ( R, I, P ) \in A \} \\
&\qquad \cup \{ ( h(a), \emptyset, a' ) \, | \, a \in \Sigma \}  \\
&\qquad \cup \{ ( d, \emptyset, h(D_0) ) \},  \\
&D'_0 = d,
\end{align*}
and
\begin{eqnarray*}
\left\{ \begin{array}{ll}
h(a) = a' \qquad (\text{for }a \in \Sigma),  \\
h(a) = a \qquad \, (\text{for }a \in S' - \Sigma).
\end{array} \right.
\end{eqnarray*}
Then, it holds that $h(L_{\lambda}(\mathcal{A})) = L_{\lambda}(\mathcal{A'})$ and the workspace of $\mathcal{A'}$ is linear-bounded.

In Theorem 3.7.1 of \cite{SG:75}, it is shown that each family closed under inverse morphisms, intersection with regular languages and codings is also closed under $\lambda$-free morprhisms. Hence, $\lambda$-$\mathcal{LRA}$ is closed under $\lambda$-free morprhisms.

[$\lambda$-free gsm-mappings]
Since every trio is closed under $\lambda$-free gsm-mappings (\cite{AS:73}), $\lambda$-$\mathcal{LRA}$ is closed under $\lambda$-free gsm-mappings.
\end{proof}

We shall show that $\lambda$-$\mathcal{LRA}$ shares common negative closure properties with $\mathcal{LRA}$. The manner of proving those results is almost parallel to that of proofs for $\mathcal{LRA}$ presented in the previous section. In order to make this paper self-contained,  below we give the proof of the following lemma that is a $\lambda$-version of 
Lemma 2 (i.e., of Lemma 1 in \cite{OKY:12}). 
 
\begin{lem}
For an alphabet $\Sigma$ with $|\Sigma| \ge 2$, let $h:\Sigma^* \rightarrow \Sigma^*$ be an injection such that for any $w \in \Sigma^*$, $|h(w)|$ is bounded by a polynomial of $|w|$. Then, there is no PRA $\mathcal{A}$ such that $L_{\lambda}(\mathcal{A}) = \{ wh(w) \, | \,  w \in \Sigma^* \}$. 
\label{lem-ww-lambda}
\end{lem}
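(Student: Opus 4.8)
The plan is to argue by contradiction with a pigeonhole argument on bounded‑weight configurations, mirroring the proof of Lemma~\ref{lem-ww} but with the extra bookkeeping forced by the silent $\lambda$-steps. Suppose some poly-bounded RA $\mathcal{A}=(S,\Sigma,A,D_0,f)$ satisfied $L_\lambda(\mathcal{A})=L_h:=\{wh(w)\mid w\in\Sigma^*\}$, and write $n=|w|$. Since $|h(w)|$ is bounded by a polynomial of $n$, each accepted string $wh(w)$ has length polynomial in $n$, and I would fix for every $w$ a minimal-workspace accepting process $\pi_w\in IP^a_\lambda(\mathcal{A},wh(w))$; the aim is that every multiset occurring in $\pi_w$ then has weight at most a fixed polynomial $q(n)$ (this boundedness in the $\lambda$-input mode is one of the two points to be checked below). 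Because $S$ is a fixed finite set, the number of multisets over $S$ of weight at most $q(n)$ is at most $\binom{q(n)+|S|}{|S|}=O(q(n)^{|S|})$, i.e.\ bounded by a polynomial $P(n)$.

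Next I would isolate, inside each $\pi_w$, the configuration $C_w$ reached immediately after the last genuine symbol of the prefix $w$ has been fed and before any symbol of $h(w)$ is consumed. This step is well defined even though $\lambda$-moves may be interleaved, since the genuine symbols of $wh(w)$ are fed in order and there is a unique step feeding the $n$-th of them. As $C_w$ is a multiset of weight at most $q(n)$, and since $|\Sigma|\ge 2$ gives $|\Sigma|^n\ge 2^n$ distinct prefixes of length $n$, for all sufficiently large $n$ the exponential family of prefixes outnumbers the $P(n)$ available configurations. By pigeonhole there are $w_1\neq w_2$ with $|w_1|=|w_2|=n$ and $C_{w_1}=C_{w_2}$ (optionally one first restricts to prefixes sharing a common value of $|h(\cdot)|$, which only costs a polynomial factor and makes the lengths match in the concluding step).

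The heart of the argument is the crossing. Since a transition in the $\lambda$-input mode depends only on the current multiset and on the genuine-or-empty symbol fed, I would graft the initial segment of $\pi_{w_2}$ up to $C_{w_2}$ onto the final segment of $\pi_{w_1}$ that runs from $C_{w_1}=C_{w_2}$ while feeding $h(w_1)$ and converging with $f$. The resulting sequence is a legal process in $IP^a_\lambda(\mathcal{A},w_2h(w_1))$, whence $w_2h(w_1)\in L_\lambda(\mathcal{A})=L_h$. The remaining combinatorial contradiction concerns only the language $L_h$ and not the machine mode: since $w_2h(w_1)$ has the same form $uh(u)$ as a genuine member of $L_h$, injectivity of $h$ together with the matched lengths rules this out unless $w_1=w_2$, exactly as in Lemma~\ref{lem-ww}.

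The main obstacle I anticipate is precisely the $\lambda$-layer, and it splits into two checks. First, I must secure that the boundary configuration $C_w$ really has polynomially bounded weight despite the silent steps; this is why I select minimal-workspace accepting processes and invoke polynomial boundedness, and it is the delicate point because poly-boundedness is stated through the ordinary input mode. Second, I must verify that the crossed sequence is a genuinely admissible $\lambda$-input-mode interactive process: here the interleaving $\lambda$-pattern of the appended $h(w_1)$-segment is inherited unchanged from $\pi_{w_1}$, so that every enabling and maximal-parallelism condition along that segment still holds when it is restarted from $C_{w_2}=C_{w_1}$. Once these two points are established, the counting and the injectivity conclusion are identical to the non-$\lambda$ case.
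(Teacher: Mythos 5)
Your proposal is correct and shares the paper's skeleton: polynomial boundedness means only polynomially many multisets $\mathcal{D}_{p(n)}$ can occur in a bounded accepting run, while there are $|\Sigma|^n \ge 2^n$ prefixes of length $n$; and since a $\lambda$-input-mode transition depends only on the current multiset and the symbol (or $\lambda$) fed, an accepting tail can be grafted onto any process reaching the same configuration, so a collision yields $w_2h(w_1)\in L_{\lambda}(\mathcal{A})$, contradicting injectivity of $h$. Where you differ is in how the pigeonhole is organized. The paper attaches to each $w$ the \emph{set} $I(w)\subseteq\mathcal{D}_{p(n)}$ of all bounded configurations reachable immediately after feeding $w$, proves these sets pairwise incomparable by the grafting argument, and then counts sets; you attach to each $w$ a \emph{single} configuration $C_w$, taken from a fixed minimal-workspace accepting run, and pigeonhole directly on elements of $\mathcal{D}_{q(n)}$. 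Your variant is more elementary and in one respect more robust: element-level pigeonholing needs only $|\mathcal{D}_{q(n)}|<|\Sigma|^n$, whereas the paper's inequality $|\{I(w)\mid w\in\Sigma^n\}|\le|\mathcal{D}_{p(n)}|$ for a pairwise-incomparable family of subsets is not immediate (an antichain in a powerset can be far larger than the ground set); the price is that $C_w$ must come from an \emph{accepting} run so that the grafted tail really accepts, which you arrange explicitly. The two points you flag as delicate are handled the same way (indeed more tersely) in the paper: poly-boundedness is there also read as bounding the workspace of $\lambda$-mode accepting processes, and the concluding implication that $w_2h(w_1)=uh(u)$ forces $h(w_1)=h(w_2)$ is asserted with the same deferral to Lemma~\ref{lem-ww}; your restriction to prefixes sharing a common value of $|h(\cdot)|$ is a sensible extra safeguard for that last step.
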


\begin{proof}
Assume that there is a {\it poly}-RA $\mathcal{A} = (S, \Sigma, A, D_0, f)$ such that $L_{\lambda}(\mathcal{A}) = \{ wh(w) \, | \,  w \in \Sigma^* \}$.  Let $|S| = m_1$, $|\Sigma|= m_2 \ge 2$ and the input string be $wh(w)$ with $|w|=n$. 

Since $|h(w)|$ is bounded by a polynomial of $|w|$, $|wh(w)|$ is also bounded by a polynomial of $n$.  Hence, for each $D_i$ in an interactive process $\pi \in IP_{\lambda}( \mathcal{A}, wh(w) )$, it holds that $|D_i| \le p(n)$ for some polynomial $p(n)$ from the definition of a {\it poly}-RA.

Let $\mathcal{D}_{p(n)} = \{ D \in S^\# \, | \, |D| \le p(n) \}$. Then, it holds that
\begin{align*}
&|\mathcal{D}_{p(n)}| = \sum^{p(n)}_{k = 0} {}_{m_1} \mathrm{H}_k = \sum^{p(n)}_{k = 0} \frac{(k + m_1 -1)!}{k! \cdot (m_1 -1) !} = \frac{(p(n) + m_1)!}{p(n)! \cdot m_1 !}  \\
&\hspace{11mm} = \frac{(p(n) + m_1)(p(n) + m_1 - 1) \cdots (p(n) + 1)}{ m_1 !} \tag{$*$}
\end{align*}
where ${}_{m_1} \mathrm{H}_k$ denotes the number of repeated combinations of $m_1$ things taken $k$ at a time. Therefore, there is a polynomial $p'(n)$ such that $|\mathcal{D}_{p(n)}| = p'(n)$. Since it holds that $|\Sigma^n| = (m_2)^n$, if $n$ is sufficiently large,  we obtain the inequality $|\mathcal{D}_{p(n)}| < |\Sigma^n|$. 

For $w = a_1 \cdots a_n \in \Sigma^*$, let $I(w) = \{ D \in \mathcal{D}_{p(n)} \, | \, \pi = D_0 \rightarrow^{a_1} \cdots \rightarrow^{a_n} D \rightarrow \cdots \in IP_{\lambda}(\mathcal{A}, w) \} \subseteq \mathcal{D}_{p(n)}$, i.e., $I(w)$ is the set of multisets in $\mathcal{D}_{p(n)}$ which appear immediately after inputing $w$ in $IP_{\lambda}(\mathcal{A}, w)$. From the fact that $L(\mathcal{A}) = \{ wh(w) \, | \,  w \in \Sigma^* \}$ and $h$ is an injection, we can show that for any two distinct strings $w_1, w_2 \in \Sigma^n$, $I(w_1)$ and $I(w_2)$ are incomparable. This is because if $I(w_1) \subseteq I(w_2)$, then the string $w_2 h(w_1)$ is in $L_{\lambda}(\mathcal{A})$, which means that $h(w_1) = h(w_2)$ and contradicts that $h$ is an injection. 

Since for any two distinct strings $w_1, w_2 \in \Sigma^n$, $I(w_1)$ and $I(w_2)$ are incomparable and $I(w_1), I(w_2) \subseteq \mathcal{D}_{p(n)}$, it holds that 
\[ | \{ I(w) \, | \, w \in \Sigma^n \} | \le |\mathcal{D}_{p(n)}| < |\Sigma^n|. \]
However, from the pigeonhole principle, the inequality $| \{ I(w) \, | \, w \in \Sigma^n \} | < |\Sigma^n|$ contradicts that for any two distinct strings $w_1, w_2 \in \Sigma^n$, $I(w_1) \ne I(w_2)$. Hence, there is no LRA $\mathcal{A}$ such that $L_{\lambda}(\mathcal{A}) = \{ wh(w) \, | \,  w \in \Sigma^* \}$.
\end{proof}

\begin{thm}
$\lambda$-$\mathcal{LRA}$ is not closed under complementation, quotient by regular languages, morphisms or gsm-mappings.
\end{thm}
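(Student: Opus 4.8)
The plan is to follow the same two-pronged strategy used for $\mathcal{LRA}$ in Theorem~2, transplanting each ingredient to its $\lambda$-counterpart. The two test languages I would use are $L_1 = \{w_1 w_2 \mid w_1,w_2 \in \{a,b\}^*,\ |w_1|=|w_2|,\ w_1 \ne w_2\}$ together with the odd-length strings (exactly the $L_1$ of Lemma~\ref{lem-nww}) and its complement $\overline{L_1} = \{ww \mid w \in \{a,b\}^*\}$. The observation that makes everything go through is that $\mathcal{LRA} \subseteq \lambda\text{-}\mathcal{LRA}$: Theorem~3 shows that each $L(\mathcal{A})$ for an LRA $\mathcal{A}$ equals $L_\lambda(\mathcal{A}')$ for some LRA $\mathcal{A}'$, so any language already known to lie in $\mathcal{LRA}$ automatically lies in $\lambda\text{-}\mathcal{LRA}$.

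For non-closure under complementation I would argue as follows. By Lemma~\ref{lem-nww}, $L_1 \in \mathcal{LRA}$, hence $L_1 \in \lambda\text{-}\mathcal{LRA}$ by the inclusion above. On the other hand, $\overline{L_1} = \{ww \mid w \in \{a,b\}^*\}$ is exactly $\{w\,h(w) \mid w \in \{a,b\}^*\}$ for the identity injection $h$, whose output length is trivially polynomial; so Lemma~\ref{lem-ww-lambda} gives $\overline{L_1} \notin \lambda\text{-}\mathcal{PRA}$, and since $\lambda\text{-}\mathcal{LRA} \subseteq \lambda\text{-}\mathcal{PRA}$ we obtain $\overline{L_1} \notin \lambda\text{-}\mathcal{LRA}$. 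Were $\lambda\text{-}\mathcal{LRA}$ closed under complementation, $L_1 \in \lambda\text{-}\mathcal{LRA}$ would force $\overline{L_1} \in \lambda\text{-}\mathcal{LRA}$, a contradiction.

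The remaining three operations I would derive, entirely parallel to the argument in Theorem~2, from the homomorphic characterization of $\mathcal{RE}$ in Corollary~\ref{cor-re-lra}. That corollary expresses every recursively enumerable language as a morphic (weak-coding/projection) image of a language in $\mathcal{LRA}$; combined with $\mathcal{LRA} \subseteq \lambda\text{-}\mathcal{LRA}$, every $L \in \mathcal{RE}$ is such an image of a language in $\lambda\text{-}\mathcal{LRA}$. Since the witness $\overline{L_1}=\{ww\}$ is recursively enumerable but, as just shown, not in $\lambda\text{-}\mathcal{LRA}$, we have $\mathcal{RE} \not\subseteq \lambda\text{-}\mathcal{LRA}$; hence closure under morphisms, which would give $\mathcal{RE} \subseteq \lambda\text{-}\mathcal{LRA}$, is impossible. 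Non-closure under gsm-mappings is then immediate, because every morphism (including erasing ones) is realised by a single-state gsm that outputs $h(a)$ on reading $a$, so closure under gsm-mappings would re-prove closure under morphisms.

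The one genuinely delicate point is quotient by regular languages, and this is where I expect the main obstacle to lie: $\lambda\text{-}\mathcal{LRA}$ is already closed under both derivatives and intersection with regular sets, so the failure must come from the \emph{erasing} hidden inside a general quotient. My plan is to exhibit the morphism of Corollary~\ref{cor-re-lra} in the special shape of a right quotient, by arranging (as a valid-computation-history encoding naturally does) that the erased symbols of the witnessing LRA language $L'$ all occur as a suffix over a fixed sub-alphabet $\Delta$; then the projection that erases $\Delta$ coincides with the right quotient $L'/\Delta^*$ by the regular language $\Delta^*$. Closure under quotient by regular languages would once more force $\mathcal{RE} \subseteq \lambda\text{-}\mathcal{LRA}$, contradicting $\mathcal{RE} \not\subseteq \lambda\text{-}\mathcal{LRA}$. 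Checking that the construction behind Corollary~\ref{cor-re-lra} can indeed be put into this ``kept-prefix / erased-suffix'' normal form is the step that requires care; every other part of the proof is a direct transcription of the corresponding $\mathcal{LRA}$ argument.
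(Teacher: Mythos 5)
Your proposal is correct and follows essentially the paper's own route: the paper gives this theorem no separate proof, noting only that it is ``almost parallel'' to the $\mathcal{LRA}$ case, namely $L_1 \in \mathcal{LRA} \subseteq \lambda$-$\mathcal{LRA}$ (via Theorem~3), $\overline{L_1} = \{ww \mid w \in \{a,b\}^*\} \notin \lambda$-$\mathcal{PRA} \supseteq \lambda$-$\mathcal{LRA}$ (via Lemma~\ref{lem-ww-lambda}), and Corollary~\ref{cor-re-lra} for the remaining operations, exactly as you argue. The one ``genuinely delicate point'' you flag is in fact a non-issue: Corollary~\ref{cor-re-lra}(i) already asserts the quotient representation $L = R \backslash L(\mathcal{A})$ (resp.\ $L(\mathcal{A})/R$) directly --- the erased symbols ${c_3}^j{c_2}^i c_1$ of Corollary~1 form a prefix over fresh symbols, so $R = {c_3}^*{c_2}^*c_1$ is regular --- hence no further ``kept-prefix / erased-suffix'' normal-form verification is required, and your quotient argument goes through verbatim with that corollary cited as stated.
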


\begin{cor}
$\lambda$-$\mathcal{LRA}$ is an AFL, but not a full AFL.
\end{cor}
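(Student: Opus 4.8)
The plan is to read off both halves of the statement directly from the preceding theorems, after recalling the standard axioms of an AFL. Recall that a family of languages is an \emph{AFL} if it contains at least one nonempty language and is closed under union, concatenation, Kleene $+$, intersection with regular languages, $\lambda$-free morphisms, and inverse morphisms; it is a \emph{full AFL} if it is moreover closed under arbitrary (possibly erasing) morphisms and under Kleene $*$.

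First I would verify the six AFL axioms. Closure under union, concatenation, Kleene $+$, $\lambda$-free morphisms and inverse morphisms is exactly what Theorem 4 supplies. For the remaining axiom, closure under intersection with regular languages, I would combine two facts already available: Theorem 4 gives closure under (full) intersection, while by Proposition 1 together with Theorem 3 we have $\mathcal{REG} = k$-$\mathcal{RA} \subseteq \mathcal{LRA} \subseteq \lambda$-$\mathcal{LRA}$, so that every regular language lies in $\lambda$-$\mathcal{LRA}$. Intersecting an arbitrary member of $\lambda$-$\mathcal{LRA}$ with a regular language is therefore a special case of the intersection closure. Nonemptiness is immediate, since $\lambda$-$\mathcal{LRA}$ contains all regular languages and hence a nonempty one. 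This establishes that $\lambda$-$\mathcal{LRA}$ is an AFL.

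For the negative half, I would invoke the elementary fact that every full AFL is in particular closed under arbitrary morphisms. But Theorem 5 states precisely that $\lambda$-$\mathcal{LRA}$ is \emph{not} closed under morphisms, so $\lambda$-$\mathcal{LRA}$ cannot be a full AFL. I would also remark that closure under Kleene $*$, the other feature distinguishing full AFLs from AFLs, \emph{does} hold here by Theorem 4; hence it is the morphism axiom, and that alone, which does the separating.

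There is no genuine obstacle in this corollary, since everything reduces to assembling closure results already proved. The only points demanding a little care are the intersection-with-regular axiom, where one must make explicit that full intersection closure together with the inclusion $\mathcal{REG} \subseteq \lambda$-$\mathcal{LRA}$ yields the weaker axiom, and the observation that it is the failure of morphism closure, and not of Kleene $*$, that obstructs fullness.
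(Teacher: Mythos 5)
Your proposal is correct and follows essentially the same route as the paper, which leaves the corollary's proof implicit as a direct assembly of Theorem 4 (positive closures, with intersection with regular languages obtained from full intersection closure plus $\mathcal{REG} \subseteq \mathcal{LRA} \subseteq \lambda$-$\mathcal{LRA}$ via Proposition 1 and Theorem 3) and Theorem 5 (failure of closure under arbitrary morphisms, which rules out a full AFL). Your added remark that Kleene $*$ closure does hold, so that it is the morphism axiom alone that blocks fullness, is a correct and harmless refinement of the same argument.
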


\begin{flushleft}
Remark: We note the class $\lambda$-$\mathcal{PRA}$ could be proved to be an AFL in the same manner as $\lambda$-$\mathcal{LRA}$.
\end{flushleft}

\section{Further characterizations of $\mathcal{LRA}$ and $\mathcal{ERA}$}

In this section, we develop further characterizations concerning $\mathcal{LRA}$ and $\mathcal{ERA}$ in relation to the Chomsky hierarchy, and show two interesting results. One is concerned with a representation theorem for the class $\mathcal{RE}$ in terms of $\mathcal{LRA}$, and the other is a new characterization of $\mathcal{CS}$ with $\mathcal{ERA}$.

\begin{thm}
For any context-sensitive language $L \subseteq \Sigma^*$, there exists an LRA $\mathcal{A}$ such that $w \in L$ if and only if $c^{2^n} w \in L(\mathcal{A})$ {\rm (}or $w c^{2^n} \in L(\mathcal{A})${\rm )} with $|w|=n$ and $c \notin \Sigma$.
\end{thm}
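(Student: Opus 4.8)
The plan is to reduce $L$ to a (nondeterministic) linearly bounded automaton (LBA) and to simulate it inside a reaction automaton, exploiting the fact that a string of length $n$ can be stored in a multiset of weight $2^{n}-1$ via the encoding $stm$. The padding block $c^{2^{n}}$ is precisely what licenses this exponential-weight storage while keeping the peak workspace \emph{linear} in the length $2^{n}+n$ of the padded input, so that the resulting automaton qualifies as an LRA rather than merely as an ERA (cf. Proposition~\ref{prop-ra}, which only gives $\mathcal{CS}\subseteq\mathcal{ERA}$).

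First I would fix an LBA $M$ accepting $L$ that uses exactly $n=|w|$ tape cells on input $w$, the usual linear constant being absorbed into the tape alphabet. A configuration of $M$ (tape contents $X_{1}\cdots X_{n}$, head position $i$, and control state) is encoded as a multiset in the spirit of $stm$, giving cell $j$ the multiplicity $2^{\,n-j}$, so that the whole tape has weight $2^{n}-1$; the head is recorded by marking the symbol in the head cell, and the finite state is held in a bounded number of auxiliary objects. A single move of $M$ rewrites the symbol under the head and shifts the mark to the neighbouring cell, and I would realize it by reactions that are \emph{weight-neutral}; these are exactly the simulation reactions underlying the construction for $\mathcal{CS}\subseteq\mathcal{ERA}$. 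By Lemma~\ref{lem-nf} I may further assume the special symbol $f$ is produced only at an accepting configuration, i.e.\ only in the converging state.

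The automaton $\mathcal{A}$ then runs in three phases on input $c^{2^{n}}w$. In the $c$-phase each incoming $c$ is turned into one budget token, so after the prefix exactly $2^{n}$ tokens are present. In the loading phase the symbols of $w$ are read left to right: upon reading $X_{j+1}$, $\mathcal{A}$ first doubles the entire current encoding — each stored copy is paired with one token and replaced by two copies, which is weight-neutral and is precisely the doubling used by reactions ${\bf a}_{4},{\bf a}_{5}$ in Example~\ref{exam-c2n} — and then appends a single copy of $X_{j+1}$. After $a_{1},\dots,a_{n}$ the multiset is $X_{1}^{2^{n-1}}\cdots X_{n}^{2^{0}}$ of weight $2^{n}-1$, assembled at a total cost of $\sum_{j=1}^{n-1}(2^{\,j}-1)=2^{n}-n-1$ tokens, which the $2^{n}$ tokens of the prefix just cover. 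In the simulation phase (the moves after the input is exhausted) $\mathcal{A}$ applies the weight-neutral transition reactions of $M$ and emits $f$ only at an accepting configuration. The equivalence $w\in L\iff c^{2^{\,|w|}}w\in L(\mathcal{A})$ follows because, with exactly $2^{n}$ tokens available, the loading completes and the simulation has an accepting run iff $M$ accepts $w$. For the LRA bound I maintain the single invariant that the total weight never exceeds the number of input symbols already consumed, up to an additive constant: the only weight-increasing event is reading an input symbol, since every doubling merely trades one token for one new copy. This invariant holds for \emph{every} accepted input — not just for the intended padding — and yields $WS(c^{2^{n}}w,\mathcal{A})\le 2^{n}+n+O(1)$, i.e.\ linear in the padded length.

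The delicate part, and the step I expect to be the main obstacle, is the faithful simulation of an LBA move on this $stm$-style encoding. Because cell $j$ is represented only by the multiplicity $2^{\,n-j}$, re-marking an entire multiplicity block when the head shifts — without duplicating or losing the mark, without disturbing the other blocks, and while keeping the step exactly weight-neutral under the maximally parallel semantics — is where the real care is needed, and it is essentially the core of the $\mathcal{CS}\subseteq\mathcal{ERA}$ argument that I would reuse. A secondary point requiring attention is pinning the simulation's peak workspace to $2^{\,|w|}+O(|w|)$, the bare $stm$ weight, since it is this that fixes the base of the padding at $2$ and makes exactly $c^{2^{n}}$ (rather than $c^{\,c\cdot 2^{n}}$) suffice. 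Once these are in place, the budgeted-padding transformation and the token bookkeeping around it are routine.
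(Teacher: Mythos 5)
Your high-level idea --- that the $c^{2^n}$ padding is what licenses exponential-weight storage while keeping the workspace linear in the padded length --- is the right one, and your invariant (weight grows only when an input symbol is consumed) is a clean way to argue the LRA bound. But your route differs substantially from the paper's, and the step where you locate the main difficulty is precisely the step the paper never performs. The paper does not re-implement any simulation: it takes the ERA $\mathcal{A}$ accepting $L$ from Proposition~\ref{prop-ra} as a black box, copies its reactions verbatim into $\mathcal{A'}$ with $c$ and $f'$ added as inhibitors (so they are dormant while the padding is read), and adds only two small pieces of machinery: the reactions of Example~\ref{exam-c2n}, which check that the $c$-prefix has length a power of two $2^k$, and counters $n_1, n_2$ (cancelled pairwise by a reaction $(n_1 n_2, \emptyset, \lambda)$) which check $k = |w|$; the final symbol $f'$ is gated on the ERA's $f$, the checker's $f''$, and the absence of surplus counters. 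The observation you are missing is that the LRA workspace condition is only required to hold for \emph{accepted} inputs, so it suffices to guarantee that nothing is accepted unless the padding really is $c^{2^{|w|}}$; once that verification is in place, the ERA's $O(2^n)$ workspace on the $w$-suffix is automatically linear in the input length $2^n + n$, and no weight bookkeeping inside the simulation is needed at all.

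Because you instead thread the token budget through the simulation itself, your proof stands or falls on the assertion that an LBA move can be realized by weight-neutral reactions on the $stm$-style encoding --- and that is a genuine gap, not a routine citation. The reactions underlying $\mathcal{CS} \subseteq \mathcal{ERA}$ are not weight-neutral move by move: in the natural (two-stack or marked-cell) encodings, shifting the head halves one multiplicity block (weight decreases) and doubles another (weight increases), and the increase can be as large as $2^{n-i}$ in a single step, so your invariant ``the only weight-increasing event is reading an input symbol'' fails unless you additionally arrange that every halving releases tokens and every doubling consumes them, and then prove the token pool never runs dry at any head position --- exactly the kind of argument your proposal defers. You would also need to handle two failure modes you do not mention: a maximally parallel doubling that exhausts the tokens midway leaves a partially doubled, garbage encoding, and you must show garbage can never lead to acceptance; and the $n+1$ leftover tokens after loading must not block convergence. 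To your credit, you correctly flag that the base of the exponential must be pinned to $2$ for $c^{2^n}$ (rather than $c^{c_2^n}$) to suffice; the paper's own proof passes over this in silence, tacitly assuming the ERA of Proposition~\ref{prop-ra} has workspace $O(2^n)$, which the $stm$-based construction indeed provides.
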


\begin{proof}
From Proposition \ref{prop-ra}, let $\mathcal{A} = (S, \Sigma, A, D_0, f)$ be an ERA which accepts $L$. Then, construct an RA $\mathcal{A'} = (S', \Sigma \cup c, A', D'_0, f')$ as follows:
\begin{align*}
&S' = S \cup \{ p_0, p_1, p_2, p_3, p_4, n_1, n_2, c, c_1, c_2, d, e, f', f'' \}, \\
&A' = \{ R, I \cup \{ c, f' \}, P ) \, | \, ( R, I, P ) \in A \} \\
&\qquad \cup \{ {\bf a}'_1, {\bf a}'_2, {\bf a}_3, {\bf a}_4, {\bf a}_5, {\bf a}_6, {\bf a}_7, {\bf a}'_8, {\bf a}_9, {\bf a}_{10}, {\bf a}_{11}, {\bf a}_{12} \}, \ \mbox{where} \\
&\quad  {\bf a}'_1 = ( p_0, c, p_1 p_2 p_3 n_2 ),\ {\bf a}'_2 = ( p_1, eff'f'', p_1 n_1 ),\ {\bf a}_3 = ( c, p_1, c_1 ),\ {\bf a}_4 = ( {c^2_1}, p_0 c_2 e, c_2 ),  \\
&\quad {\bf a}_5 = ( {c^2_2}, p_0 c_1 e, c_1 ), \ {\bf a}_6 = ( c_1 d, p_0 c_2, e ),\ {\bf a}_{7} = ( c_2 d, p_0 c_1, e ),\ {\bf a}'_{8} = ( e, p_0 c c_1 c_2, f'' ), \\
&\quad {\bf a}_9 = ( p_2, cp_4, p_2 n_2 ), \ {\bf a}_{10} = ( p_3, \Sigma, p_4 ), \ {\bf a}_{11} = ( n_1 n_2, \emptyset, \lambda ), \ {\bf a}_{12} = ( ff'', p_3 n_1 n_2, f' ), \\
&D'_0 = D_0 + dp_0.
\end{align*}

Note that the reactions ${\bf a}'_1$-${\bf a}'_8$ are almost the same as the ones of Example \ref{exam-c2n}. Therefore, the total number of $n_1$ appearing in a interactive process of $IP(\mathcal{A'}, c^{2^n}w)$ is $n+1$ (see Example \ref{exam-c2n} and Figure \ref{fig-c2n}). 
On the other hand, the total number of $n_2$ appearing in a interactive process of $IP(\mathcal{A'}, c^{2^n}w)$ is $|w|+1$, which is derived by the reactions ${\bf a}'_1, {\bf a}_9, {\bf a}_{10}$. Using the reaction ${\bf a}_{11}$, it is confirmed that if $c^{2^n}w$ is accepted by $\mathcal{A'}$, then $n+1 = |w|+1$.
Hence, it holds that $w \in L(\mathcal{A})$ if and only if $c^{2^n} w \in L(\mathcal{A'})$ with $|w|=n$.  

Since the workspace of $\mathcal{A}$ for $w$ is bounded by an exponential function with respect to the length $|w|=n$, the workspace of $\mathcal{A'}$ for $c^{2^n} w$ is bounded by a linear function with respect to the length $|c^{2^n} w| = 2^n + n$.  

For the case $wc^{2^n}$, we can prove in a similar manner.
\end{proof}

\begin{thm}[Theorem 3.12 in \cite{PRS:98}]
For any recursively enumerable language $L \subseteq \Sigma^*$, context-sensitive language $L'$ such that $w \in L$ if and only if ${c_2}^i c_1 w \in L'$ (or $w c_1 {c_2}^i \in L'$) for some $i \ge 0$ and $c_1, c_2 \notin \Sigma$. 
\end{thm}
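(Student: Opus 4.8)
The plan is to use the standard machine characterizations: $\mathcal{RE}$ is the family of languages accepted by nondeterministic Turing machines, while $\mathcal{CS}$ is exactly the family accepted by nondeterministic linear-bounded automata (LBAs). The padding prefix $c_2^i c_1$ will serve solely as extra tape space, converting the (possibly super-linear) space demand of a Turing machine into a demand that is linear in the \emph{padded} input length.

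First I would fix a single-tape nondeterministic Turing machine $M$ with $L(M) = L$. The crucial, elementary observation is that whenever $w \in L$, any one accepting computation of $M$ on $w$ halts and therefore visits only finitely many tape cells, say at most $s$; this finite $s$ is precisely what the padding must supply. Accordingly I would define
\[
L' = \{\, c_2^i c_1 w \mid i \ge 0,\ w \in \Sigma^*,\ M \text{ has an accepting computation on } w \text{ using at most } i + |w| + 1 \text{ cells} \,\}.
\]

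To place $L'$ in $\mathcal{CS}$ I would describe an LBA $M'$ that, on input $c_2^i c_1 w$ (of length $i + 1 + |w|$), verifies the prefix format and then uses its whole tape as the work region to guess and check an accepting computation of $M$ on $w$; encoding $M$'s state and tape contents on a constant number of tracks keeps the cell count unchanged, so $M'$ never leaves the input region and is genuinely linear-bounded. Note that $M'$ need not halt on rejected inputs: as a nondeterministic device it accepts exactly those $c_2^i c_1 w$ for which some computation of $M$ on $w$ reaches an accepting state within the available $i + |w| + 1$ cells, which is exactly $L'$; hence $L' \in \mathcal{CS}$. For the equivalence, the reverse implication is immediate from the definition, while for the forward implication one takes any accepting computation of $M$ on $w$, lets $s$ be the number of cells it uses, and chooses $i = \max\{0,\, s - |w| - 1\}$, so that $c_2^i c_1 w \in L'$. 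The alternative form $w c_1 c_2^i$ is obtained symmetrically by having $M'$ read the suffix.

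The step I expect to require the most care is the space bookkeeping that certifies $M'$ is truly linear-bounded: the simulation of $M$ (state, head position, tape contents, and the nondeterministic guessing of successive configurations) must be packed, via a fixed number of tracks and a suitably enlarged tape alphabet, into exactly the $i + |w| + 1$ cells of the padded input, with no hidden overhead that would push the head past the right endmarker. Once this accounting is pinned down, the existential quantifier over $i$ captures precisely the halting accepting computations of $M$, and the theorem follows.
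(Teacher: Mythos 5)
Your proof is correct, but there is no in-paper proof to compare it with: the paper invokes this statement purely as a citation of Theorem 3.12 in \cite{PRS:98}. The argument in that source is the classical grammar-based one: starting from a type-0 grammar for $L$, erasing steps are replaced by steps that turn the symbols to be deleted into occurrences of the padding symbol, which are migrated to the designated end of the sentential form; the simulation thereby becomes length-nonincreasing (workspace-bounded), and Salomaa's workspace theorem yields a context-sensitive grammar for $L'$. Your route is machine-based instead: you define $L'$ directly by a space bound on accepting computations of a Turing machine $M$ for $L$, get the equivalence $w\in L \Leftrightarrow \exists i\,({c_2}^i c_1 w \in L')$ essentially for free (finiteness of accepting computations gives the forward direction with $i=\max\{0,\,s-|w|-1\}$), and place $L'$ in $\mathcal{CS}$ via the Kuroda--Landweber--Myhill equivalence of $\mathcal{CS}$ with nondeterministic linear-bounded automata, i.e.\ with $\mathrm{NSPACE}(n)$. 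Your space bookkeeping is sound: the padded input supplies exactly $i+|w|+1$ cells, the in-place multi-track simulation of $M$ (state and head position encoded by a composite symbol in one cell) needs no additional cells, and branches that would overflow simply fail to accept, which is harmless for a nondeterministic acceptor. What each approach buys: the grammar proof is self-contained within grammar theory and does not presuppose the LBA characterization of $\mathcal{CS}$, whereas your proof is shorter and makes the role of the padding completely transparent---it is literally workspace. Two minor points worth recording: every string of your $L'$ contains $c_1$ and is therefore nonempty, so the customary exclusion of $\lambda$ from context-sensitive languages causes no difficulty; and the suffix variant $w c_1 {c_2}^i$ is indeed obtained symmetrically, as you say.
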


\begin{cor}
For any recursively enumerable language $L \subseteq \Sigma^*$, there exists an LRA $\mathcal{A}$ such that $w \in L$ if and only if ${c_3}^j {c_2}^i c_1 w \in L(\mathcal{A})$ (or $w c_1 {c_2}^i {c_3}^j \in L(\mathcal{A})$) for some $i,j \ge 0$ and $c_1, c_2, c_3 \notin \Sigma$. 
\end{cor}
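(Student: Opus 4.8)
The plan is to obtain $\mathcal{A}$ by composing the two preceding theorems, with context-sensitive languages serving as an intermediate layer. First I would invoke Theorem 3.12 of \cite{PRS:98}: given the recursively enumerable language $L \subseteq \Sigma^*$, it produces a context-sensitive language $L' \subseteq \hat\Sigma^*$ over the enlarged alphabet $\hat\Sigma = \Sigma \cup \{c_1, c_2\}$ such that $w \in L$ if and only if $c_2^i c_1 w \in L'$ for some $i \ge 0$, where $c_1, c_2 \notin \Sigma$. Since $L'$ is context-sensitive, I would then apply the $\mathcal{CS}$-to-$\mathcal{LRA}$ theorem above to $L'$ with a fresh symbol $c_3 \notin \hat\Sigma$, obtaining an LRA $\mathcal{A}$ (with input alphabet $\Sigma \cup \{c_1, c_2, c_3\}$) such that for every $u \in \hat\Sigma^*$ we have $u \in L'$ if and only if $c_3^{\,2^{|u|}} u \in L(\mathcal{A})$. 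This $\mathcal{A}$ is the automaton claimed in the corollary.

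For the forward direction I would chain the two equivalences: if $w \in L$, then $c_2^i c_1 w \in L'$ for some $i \ge 0$, and hence, taking $j = 2^{\,|c_2^i c_1 w|} = 2^{\,i+1+|w|}$, we get $c_3^{\,j} c_2^i c_1 w \in L(\mathcal{A})$, so the required $i, j \ge 0$ exist.

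For the backward direction, suppose $c_3^{\,j} c_2^i c_1 w \in L(\mathcal{A})$ for some $i, j \ge 0$. Because $c_3 \notin \hat\Sigma$ and $c_3 \notin \Sigma$, this string decomposes uniquely as $c_3^{\,j} u$ with $u = c_2^i c_1 w \in \hat\Sigma^*$. The one point needing care, and the main obstacle, is that I must know $\mathcal{A}$ does not over-accept: a string $c_3^{\,j} u$ is in $L(\mathcal{A})$ only when $j = 2^{|u|}$ and $u \in L'$. This is precisely what the construction in the $\mathcal{CS}$-to-$\mathcal{LRA}$ theorem enforces, where the counters $n_1, n_2$ together with reaction $\mathbf{a}_{11}$ verify that the number of leading $c_3$'s equals $2^{|u|}$; reading that construction off yields $L(\mathcal{A}) = \{\, c_3^{\,2^{|u|}} u \mid u \in L' \,\}$. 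Consequently $j = 2^{|u|}$ and $u = c_2^i c_1 w \in L'$, whence $w \in L$ by Theorem 3.12 of \cite{PRS:98}, completing the equivalence.

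Finally, I would handle the suffix variant $w c_1 {c_2}^i {c_3}^j \in L(\mathcal{A})$ by the same composition, now using the ``$w c_1 {c_2}^i$'' form of Theorem 3.12 of \cite{PRS:98} together with the ``$u c^{2^{|u|}}$'' form of the $\mathcal{CS}$-to-$\mathcal{LRA}$ theorem; no new argument is required.
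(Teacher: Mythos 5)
Your proof is correct and takes essentially the same route as the paper, which states this corollary without a separate proof precisely because it follows by composing Theorem 3.12 of \cite{PRS:98} with the preceding CS-to-LRA theorem, exactly as you do. Your added care about over-acceptance (that the constructed automaton accepts $c_3^{\,j}u$ with $u$ free of $c_3$ only when $j = 2^{|u|}$ and $u \in L'$) is a legitimate and necessary reading of that theorem's construction---it is what the counters $n_1, n_2$ and reaction ${\bf a}_{11}$ enforce---rather than a departure from the paper's argument.
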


\begin{cor}
(i) For any recursively enumerable language $L$, there exists an LRA $\mathcal{A}$ such that $L = R \verb|\|  L(\mathcal{A})$ (or $L(\mathcal{A}) / R$) for some regular language $R$. \\
(ii) For any recursively enumerable language $L$, there exists an LRA $\mathcal{A}$ such that $L = h( L(\mathcal{A}))$ for some projection $h$.
\label{cor-re-lra}
\end{cor}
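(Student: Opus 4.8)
The plan is to derive both parts directly from the preceding corollary, which supplies an LRA $\mathcal{A}$ with the property that, for every $w \in \Sigma^*$, one has $w \in L$ if and only if $c_3^{\,j} c_2^{\,i} c_1 w \in L(\mathcal{A})$ for some $i,j \ge 0$ (and symmetrically for the suffix form $w c_1 c_2^{\,i} c_3^{\,j}$). The same $\mathcal{A}$ will serve for both claims; only the external operation ($R$-quotient, resp.\ projection $h$) changes, so no closure property of $\mathcal{LRA}$ is needed---I merely manipulate the set $L(\mathcal{A})$ set-theoretically.

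For part (i) I would take the regular language $R = c_3^{*} c_2^{*} c_1$ and show $L = R \backslash L(\mathcal{A})$ for the prefix form (and $R' = c_1 c_2^{*} c_3^{*}$ with $L = L(\mathcal{A}) / R'$ for the suffix form). The inclusion $L \subseteq R \backslash L(\mathcal{A})$ is immediate: if $w \in L$, the corollary gives $i,j$ with $c_3^{\,j} c_2^{\,i} c_1 w \in L(\mathcal{A})$, and since $c_3^{\,j} c_2^{\,i} c_1 \in R$, $w$ lies in the quotient. For the reverse inclusion I would use the structural form of the accepted strings: every string of $L(\mathcal{A})$ has the shape $c_3^{\,j'} c_2^{\,i'} c_1 w'$ with $w' \in \Sigma^*$ (this is guaranteed by the construction underlying the preceding corollary, where the block $c_3^{*} c_2^{*} c_1$ is produced by the marker reactions and the single occurrence of $c_1 \notin \Sigma$ delimits the prefix from the $\Sigma$-suffix). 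Hence any factorization $u v = c_3^{\,j'} c_2^{\,i'} c_1 w'$ with $u \in R$ forces $u = c_3^{\,j'} c_2^{\,i'} c_1$ and $v = w' \in \Sigma^*$; applying the corollary to $v$ then yields $v \in L$, so $R \backslash L(\mathcal{A}) \subseteq L$.

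For part (ii) I would let $h$ be the projection that erases $c_1, c_2, c_3$ and fixes every symbol of $\Sigma$. Then $h(c_3^{\,j} c_2^{\,i} c_1 w) = w$, so by the same structural description $h(L(\mathcal{A})) = \{\, w \in \Sigma^* \mid \exists\, i,j \ge 0,\ c_3^{\,j} c_2^{\,i} c_1 w \in L(\mathcal{A}) \,\}$, which is exactly $L$ by the preceding corollary. Both directions again reduce to the iff-characterization, and since $h$ maps each letter to itself or to $\lambda$ it is a projection in the sense of Section~2.

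The one point that needs genuine care---and the main potential obstacle---is the claim that $L(\mathcal{A}) \subseteq c_3^{*} c_2^{*} c_1 \Sigma^{*}$, i.e.\ that $\mathcal{A}$ accepts no ``spurious'' strings outside this shape whose quotient or projection could leak extra words into the result. This does not follow from the iff-statement alone, but from the explicit construction in the proof of the preceding corollary; I would make this invariant explicit, noting in particular that $c_1$ occurs exactly once and serves as an unambiguous separator, so the factorization above is unique. Once this invariant is recorded, both equalities $L = R \backslash L(\mathcal{A})$ and $L = h(L(\mathcal{A}))$ follow by the elementary set-theoretic arguments sketched above.
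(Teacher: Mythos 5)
Your proposal is correct and is essentially the argument the paper intends: the corollary is stated as an immediate consequence of the preceding corollary, with $R = c_3^{*}c_2^{*}c_1$ (resp.\ $c_1c_2^{*}c_3^{*}$) for the quotient and the projection erasing $c_1,c_2,c_3$ for the homomorphic image. Your explicit flagging of the structural invariant $L(\mathcal{A}) \subseteq c_3^{*}c_2^{*}c_1\Sigma^{*}$ --- which the iff-statement alone does not supply but the underlying constructions (Theorem 3.12 of P\u{a}un--Rozenberg--Salomaa and the counting reactions of the paper's Theorem on context-sensitive languages) do guarantee --- is a point the paper glosses over, and including it makes the argument complete.
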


\begin{thm}
For a language $L$, $L$ is a context-sensitive language if and only if $L$ is accepted by an ERA.
\end{thm}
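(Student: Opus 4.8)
The plan is to establish the equality $\mathcal{ERA} = \mathcal{CS}$ by proving the two inclusions separately. The inclusion $\mathcal{CS} \subseteq \mathcal{ERA}$ is already available from Proposition \ref{prop-ra}(2), so the entire content of the theorem reduces to showing $\mathcal{ERA} \subseteq \mathcal{CS}$. For this I would rely on the classical characterisation $\mathcal{CS} = \mathrm{NSPACE}(n)$, i.e. a language is context-sensitive if and only if it is accepted by some nondeterministic linear-bounded automaton (LBA). Thus it suffices to simulate an arbitrary $exp$-RA by an LBA.

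Let $\mathcal{A} = (S, \Sigma, A, D_0, f)$ be an $exp$-RA with $|S| = k$ and workspace bounded by $s(n) \le c^{n}$ for some constant $c$. The key observation is a succinct encoding of configurations: any multiset $D \in S^\#$ with $|D| \le s(n)$ is determined by the vector of multiplicities $(D(a_1), \ldots, D(a_k))$, and each component is at most $c^n$, hence writable in binary using $\log_2(c^n) = O(n)$ bits. Since $k$ is a constant, an entire reachable configuration can be stored in $O(n)$ cells; this is exactly what makes a linear-space simulation possible.

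The simulation then proceeds step by step. The LBA keeps the binary encoding of the current configuration $D_i$ on its work tape and reads $w = a_1 \cdots a_n$ from the read-only input tape. To realise one transition it first adds the next input symbol (incrementing the corresponding counter), and then nondeterministically guesses a multiset $\alpha \in A^\#$ of reactions; since each reactant $R_{\bf a}$ is nonempty (otherwise the maximally parallel mode would be ill-defined), the multiplicities in $\alpha$ satisfy $\alpha({\bf a}) \le |R_\alpha| \le |D_i| \le s(n)$ and are thus encodable in $O(n)$ bits. It then verifies that $\alpha$ is enabled ($R_\alpha \subseteq D_i$ by componentwise comparison, and $I_\alpha \cap D_i = \emptyset$) and maximal in the maximally parallel sense. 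Maximality is a finite, local test: because enlarging $\alpha$ can only increase both $R_\alpha$ and $I_\alpha = \bigcup_{{\bf a}\subseteq \alpha} I_{\bf a}$, it suffices to check that for every single reaction ${\bf a} \in A$ the multiset $\alpha + {\bf a}$ fails to be enabled. All of these are additions, subtractions and comparisons on $O(n)$-bit integers, computable in linear space. Finally it replaces $D_i$ by $D_{i+1} = D_i - R_\alpha + P_\alpha$, abandoning the branch if any count would exceed $s(n)$. After the input is exhausted the machine keeps simulating with no further input symbols; upon reaching a configuration with $En^p_A = \emptyset$ (convergence) it accepts if and only if $f$ is present.

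The main obstacle, and the part requiring the most care, is the faithful single-step simulation of the maximally parallel mode within the linear-space budget: correctly guessing and validating a maximal enabled reaction multiset, bounding its multiplicities, and maintaining the inhibitor bookkeeping $I_\alpha = \bigcup_{{\bf a}\subseteq \alpha} I_{\bf a}$. A secondary point is guaranteeing termination of the simulation: the number of configurations of weight at most $s(n)$ is $(s(n)+1)^k = 2^{O(n)}$, so an $O(n)$-bit step counter lets the LBA discard any branch running longer than this bound, which is harmless since every word of $L(\mathcal{A})$ admits an accepting interactive process confined to space $s(n)$. Combining $\mathcal{ERA} \subseteq \mathrm{NSPACE}(n) = \mathcal{CS}$ with Proposition \ref{prop-ra}(2) yields $\mathcal{ERA} = \mathcal{CS}$.
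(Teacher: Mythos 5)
Your proposal is correct and follows essentially the same route as the paper: both directions are handled identically, with $\mathcal{CS}\subseteq\mathcal{ERA}$ taken from Proposition~\ref{prop-ra} and $\mathcal{ERA}\subseteq\mathcal{CS}$ proved by simulating the $exp$-RA on a nondeterministic linear-bounded automaton that stores each symbol's multiplicity in a positional (base-$c$) encoding so exponential counts fit in linear space, and that uses a linearly-bounded step counter (justified by the $2^{O(n)}$ bound on reachable configurations) to cut off non-converging branches. Your explicit verification of maximality by testing $\alpha+{\bf a}$ for each single reaction ${\bf a}$ is a detail the paper leaves implicit, but it does not change the substance of the argument.
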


\begin{proof}
The claim of ``only if'' part holds by Proposition \ref{prop-ra}.

Let $S = \{ s_1, \ldots, s_k \}$ be an ordered alphabet and $\mathcal{A} = (S, \Sigma, A, D_0, f)$ be an ERA. Assume that for an input $w = a_1 \cdots a_n \in \Sigma^*$, the workspace of $\mathcal{A}$ is bounded by the exponential function $s(n) = c_1 c^n_2$, where $c_1, c_2 \ge 0$ are constants. Then, we shall construct the nondeterministic $(k+2)$-tape linear-bounded automaton $M_{\mathcal{A}}$ in which the length of each tape is bounded by $ckn$ for some constant $c$.  $M_{\mathcal{A}}$ imitates an interactive process $\pi: D_0, \ldots, D_n, \ldots \in IP(\mathcal{A}, w)$ in the following manner:  
\begin{enumerate}
\item At first, Tape-$1$ has the input $w \in \Sigma^*$ and Tape-$(i+1)$ has the number of $s_i$ in $D_0$ (for $1 \le i \le k$)  represented by $c_2$-ary number. Tape-$(k+2)$ is used to count the number of computation step of $M_{\mathcal{A}}$.
\item Let $D$ be the current multiset in $\pi$. When $M_{\mathcal{A}}$ reads the symbol $s_i$ in the input, add one to the Tape-$(i+1)$. Then, by checking all tapes except Tape-$1$, compute an element of $Res_A(s_i +D)$ in the nondeterministic way and rewrite the contents in the tapes. After reading through the input $w$, $M_{\mathcal{A}}$ computes an element of $Res_A(D)$ in the nondeterministic way and rewrite the contents in the tapes.
\item After reading through the input $w$, if $Res_A(D) = \{ D \}$ and $f \subseteq D$, then $M_{\mathcal{A}}$ accepts $w$. In the case where (i) $Res_A(D) = \{ D \}$ and $f \nsubseteq D$, (ii) $|D|$ exceeds $c_1 c^n_2$ or (iii) the number of computation step exceeds $c_3 c^{kn}_2$ for $k(=|S|)$ and some constant $c_3$, $M_{\mathcal{A}}$ rejects $w$.
\end{enumerate}
Since we use $c_2$-ary number for counting the number of symbols, the length  $\log_{c_2}(c_1) + n$ of each tape is enough to memorize $D$ with $|D| \le c_1 c^n_2$. 
In the case where $M_{\mathcal{A}}$ never stops with the input $w$, there exists a cycle of configurations in the computation of $M_{\mathcal{A}}$. Since the number of all possible $D$s during the computation is bounded by $c_3 c^{kn}_2$ for $k$ and some constant $c_3$ (see the equation ($*$) in the proof of Lemma \ref{lem-ww-lambda}),  the length of Tape-$(k+2)$ to count the number of steps of computation is bounded by $\log_{c_2}(c_3) + kn$.
Therefore, it holds that $ L(M_\mathcal{A}) = L(\mathcal{A})$.
\end{proof}

Table 1 summarizes the results of closure properties of both 
$\mathcal{LRA}$ and $\lambda$-$\mathcal{LRA}$, while Figure \ref{hierarchy} 
illustrates the relationship between language classes defined by 
a various types of bounded reaction automata and the Chomsky hierarchy.

\begin{table}
\caption{Closure properties of $\mathcal{LRA}$ and $\lambda$-$\mathcal{LRA}$.}
\begin{center}
\begin{tabular}{|l|c|c|} \hline
language operations & \ $\mathcal{LRA}$ \ & \ $\lambda$-$\mathcal{LRA}$ \  \\ \hline \hline
union & Y & Y \\ \hline 
intersection & Y & Y \\ \hline
complementation & N & N \\ \hline
concatenation & Y & Y \\ \hline
Kleene $+$ & ? & Y \\ \hline
Kleene $*$ & ? & Y \\ \hline
(right \& left) derivative & Y & Y \\ \hline
(right \& left) quotient by regular languages & N & N \\ \hline
$\lambda$-free morphisms & Y & Y \\ \hline
morphisms & N & N \\ \hline
inverse morphisms & ? & Y \\ \hline
$\lambda$-free gsm-mappings & Y & Y \\ \hline
gsm-mappings & N & N \\ \hline
shuffle & Y & Y \\ \hline
\end{tabular}
\end{center}
\end{table}

\begin{figure}[t]
\centerline{
\includegraphics[scale=0.55]{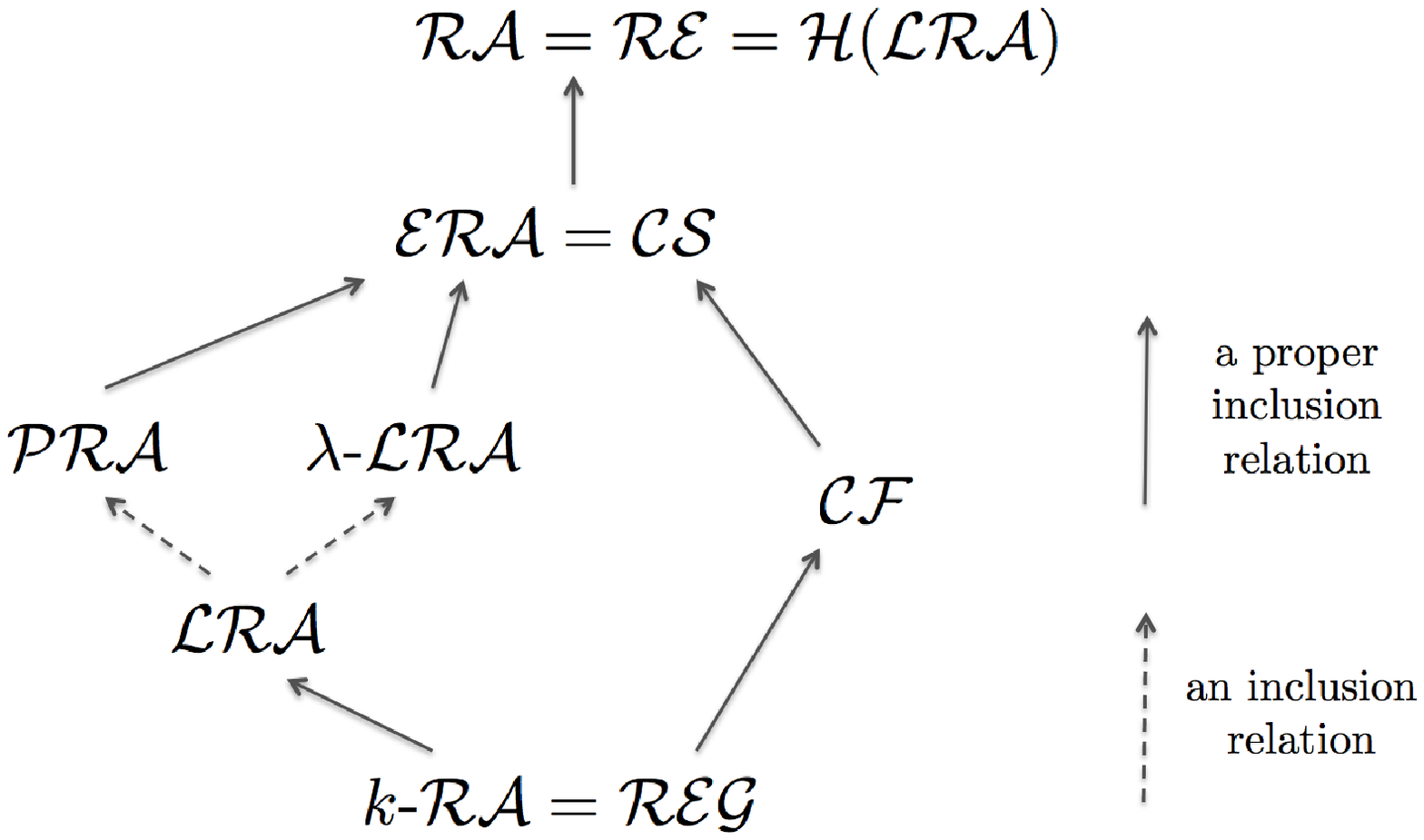}}
\caption{The hierarchy of the language classes accepted by bounded reaction automata, where $\mathcal{H(LRA)}= \{ h(L) \, | \, L \in \mathcal{LRA}, h \text{ is a projection} \}$.}
\label{hierarchy}
\end{figure}

\section{Conclusion}

We have continued the investigation of reaction automata (RAs)  with  a focus on the formal language theoretic properties of subclasses of RAs, called  linear-bounded RAs (LRAs) and  exponentially-bounded RAs (ERAs).   Besides LRAs, we have newly introduced an extended model 
(denoted by $\lambda$-LRAs) by allowing $\lambda$-moves in the accepting process of reaction, and investigated the closure properties of language classes $\mathcal{LRA}$ and $\lambda$-$\mathcal{LRA}$. 
We have shown the following :  \\
\quad (\,i\,) the class  $\lambda$-$\mathcal{LRA}$ forms an AFL with additional closure properties, \\
\quad (\rnum{2})  any recursively enumerable language can be expressed as  
a homomorphic image of a language in $\mathcal{LRA}$, \\
\quad (\rnum{3}) the class  $\mathcal{ERA}$  coincides with the class of context-sensitive languages. 

Considering  the definitions of  the existing acceptors in the classical automata theory, the result (\,i\,) suggests that it may be  reasonably justifiable to allow each  subclass of bounded RAs to have $\lambda$-transitions in its definition. 
 From the result (\rnum{2}) and the closure properties (shown in Table 1), it is interesting to see that the class  $\mathcal{LRA}$  (or $\lambda$-$\mathcal{LRA}$) is closer to the class $\mathcal{CS}$ rather than the class $\mathcal{CF}$ in its language theoretic property.  Further,  
an intriguing  result (\rnum{3}) together with the result that $\mathcal{RE}=\mathcal{RA}$ (in \cite{OKY:12}) may provide a new insight into the theory of computational complexity.  

Many subjects remain to be investigated. First, it is open whether or not $\mathcal{LRA}$ is equal to $\lambda$-$\mathcal{LRA}$, whose positive answer immediately settles open problems of the closure properties on  
$\mathcal{LRA}$ (see Table 1).  Also, we do not know the proper inclusion relation between $\mathcal{LRA}$ and $\mathcal{PRA}$. Secondly, it is interesting to explore the relationship between RAs and other computing devices that are based on the multiset rewriting, such as a variety of P-systems and their variants (\cite{PP:11}),  Petri net models (\cite{HM:01}).  Also, it would be useful to develop a method for simulating a variety of chemical reactions in the real world by the use of the framework based on reaction automata.

\end{document}